\documentclass[reqno,11pt]{amsart}
%\documentclass[a4paper,reqno]{amsart}

%--------------------------------------%
%\newtheorem{theorem}{Theorem}
%\newtheorem{lemma}{Lemma}
%\newtheorem{proposition}{Proposition}

%\newtheorem{example}{Example}
%\newtheorem{remark}{Remark}
\newtheorem{definition}{Definition}
%--------------------------------------%

%\usepackage{enumerate}
%\usepackage{hyperref}
%\usepackage{color}
\usepackage{graphicx}
\usepackage{xcolor}
\usepackage{tikz}
\usepackage{tikz-cd}
\usepackage{caption}
\usepackage{booktabs}
\usepackage{adjustbox}
\usepackage{longtable}
\usepackage{array}
\usepackage{amssymb}

\usepackage{amsmath}
%\usepackage{fontspec}
%\usepackage{unicode-math}

%\setmainfont{Latin Modern Roman}
%\setmathfont{Latin Modern Math}

\newcommand{\be}{\begin{equation}}
\newcommand{\ee}{\end{equation}}

\DeclareMathSymbol{\Lambda}{\mathord}{operators}{"03}

\usepackage{enumerate}% http://ctan.org/pkg/enumerate
\usepackage{enumitem}

\usepackage{appendix}

%\usepackage{showlabels}

%%%%%%%%%%%%%%%%%%%%%%%%%%%%%%%%%%%%%%%%%%

%\headheight=0mm
%\headsep=10mm
%\topmargin=-15mm
%\textheight=230mm
%\textwidth=17cm
%\evensidemargin=-3mm
%\oddsidemargin=-3mm

%%%%%%%%%%%%%%%%%%%%%%%%%%%%%%%%%%%%%%%%%%

\textwidth=16.3cm
\evensidemargin=-3mm
\oddsidemargin=-3mm

\newtheorem{thm}{Theorem}[section]
\newtheorem{prop}[thm]{Proposition}
\newtheorem{theorem}[thm]{Theorem}

\newtheorem{lemma}[thm]{Lemma}

\newtheorem{remark}[thm]{\it Remark}
\newtheorem{example}[thm]{\it Example}

\usepackage{tikz-3dplot}
\usepackage{xifthen}
\usepackage{caption}

\newcommand{\cp}{\mathbb C\mathbb P^1}

\tdplotsetmaincoords{60}{125}
%\tdplotsetrotatedcoords{8}{8}{8} %<- rotate around (z,y,z)
\tdplotsetrotatedcoords{0}{20}{0} %<- rotate around (z,y,z)

\usepackage[
labelfont=sf,
hypcap=false,
format=hang
]{caption}

\usepackage{mathtools}

\begin{document}

\title{On quadrirational pentagon maps}

\author[C. Evripidou]{Charalampos Evripidou}
\address{Charalampos Evripidou, Department of Mathematics and Statistics, University of Cyprus, P.O. Box 20537, 1678, Nicosia, Cyprus}
 \email{evripidou.charalambos@ucy.ac.cy}

\author[P. Kassotakis]{Pavlos Kassotakis}
\address{Pavlos Kassotakis, Department of Mathematical Methods in Physics, Faculty of Physics,
University of Warsaw, Pasteura 5, 02-093, Warsaw, Poland}
 \email{Pavlos.Kassotakis@fuw.edu.pl, pavlos1978@gmail.com}

\author[A. Tongas]{Anastasios Tongas}
\address{Anastasios Tongas,  Department of Mathematics, University of Patras, 26 504 Patras, Greece}
 \email{tasos@math.upatras.gr}

\date{\today}

\begin{abstract}
We classify rational solutions of a specific type  of the set theoretical version of the pentagon equation. That is,
we find all quadrirational maps $R:(x,y)\mapsto (u(x,y),v(x,y)),$ where $u, v$ are two rational functions on two arguments,
that serve as  solutions of the pentagon equation. Furthermore, provided a  pentagon map that admits a partial inverse, we
obtain genuine entwining pentagon set theoretical solutions. Finally, we show how to obtain Yang-Baxter maps from entwining pentagon maps.
\end{abstract}

\maketitle

\section{Introduction}
The {\em pentagon equation} serves as one of a handful of very important equations in mathematical physics. It firstly appeared  as an identity satisfied by the  Racah coefficients \cite{Biedenharn:1953,Elliott:1953}. 
Also it was introduced in \cite{Drinfeld_p}  as a consistency relation in quasi-Hopf
algebras,   in  \cite{MoorSeib:89} inside the context of conformal field theory and in \cite{Maillet:1990,Kaufmann_1993}
in connection with three-dimensional integrable systems. In detail the pentagon equation (see Figure \ref{fig01}) reads
\begin{gather} \label{pe_def}
 R_{12}R_{13}R_{23}=R_{23}R_{12}.
\end{gather}
In the equation above, $R$ can stand for an operator, so we have the operator version of (\ref{pe_def}); or $R$ can stand
for a map so we have the set theoretic version of the latter. In the set theoretic version of (\ref{pe_def}) the subscripts
denote the sets where the maps $R_{ij}$ act nontrivially, while in the operator version of (\ref{pe_def}), they denote the vector
spaces where non-trivial action takes place. Solutions of the set theoretic version of of (\ref{pe_def}) are called {\em set theoretical solutions of the pentagon equation} or simply {\em pentagon maps}.

The pentagon equation and its {\em dual } equation  that reads
\begin{gather} \label{pe_def_d}
 R_{23}R_{13}R_{12}=R_{12}R_{23},
\end{gather}
and is referred to as {\em reverse-pentagon equation} (see Figure \ref{fig02}), serve as the
first non-trivial examples of the so-called polygon equations  \cite{Dimakis:2015,korepanov:2022,Hoissen2023}.
Note that both equations above are similar to the well known Yang-Baxter equation $R_{23}R_{13}R_{12}=R_{12}R_{13}R_{23},$
with the middle term missing on either side.

Shortly after the pentagon equation was introduced, set theoretical solutions  in connection with Poisson maps appeared in \cite{Zakrzewski:1992},
and in \cite{Skandalis:1993} in connection with measured spaces.  A  systematic study of set theoretical solutions of the
pentagon equation can be found in  \cite{Kashaev:1998}. Furthermore, set theoretical solutions of the pentagon equation and of its dual are interrelated with various areas of Mathematics and Physics for instance
with geometric topology \cite{Korepanov:2000}, with incidence geometry \cite{Doliwa:2014p}, with functional analysis
\cite{Faddeev:1994,Kashaev:1999}, with discrete integrable systems \cite{Doliwa:2020}, just to name a few. For a survey on set theoretical solutions of the pentagon equation we
refer to \cite{Mazzotta:2023}. For further
connections and interrelations of the pentagon equation  we refer to  \cite{Dimakis:2015,Hoissen2023} where also its
combinatorial  structures as well as associated linear problems (cf. \cite{Kassotakis:2023_p}) were studied.

  To classify pentagon maps on  sets without any structure seems beyond reach. Nevertheless,
if some structure is imposed on the sets, classification results do exist. Indeed, in \cite{Catino:2020} pentagon maps on groups
were classified in terms of their normal subgroups.

\begin{figure}[htb]\adjustbox{scale=0.6,center}{
\begin{minipage}[htb]{0.2\textwidth}
\begin{tikzcd}[row sep=0.5cm, column sep = 0.5cm,every arrow/.append style={dash}]
%\begin{tikzcd}[every arrow/.append style={dash}, row sep={38,between origins}, column sep={38,between origins}]
 1 \arrow[rightarrow,from=1-1,to=2-2]& {} & {} & {} &{} \\
 2 \arrow[rightarrow,from=2-1,to=2-2] & R_{12}\arrow[rightarrow,from=2-2,to=2-4] \arrow[rightarrow,from=2-2,to=3-3]  & {}& R_{23}
 \arrow[rightarrow,from=2-4,to=2-5] \arrow[rightarrow,from=2-4,to=1-5]& {} \\
 {} & {} & R_{13} \arrow[rightarrow,from=3-3,to=4-5] \arrow[rightarrow,from=3-3,to=2-4] &{}  & {} \\
 3 \arrow[rightarrow,from=4-1,to=3-3] &{} &{} &{} &{}
\end{tikzcd}
\end{minipage} \hspace{4cm} = \hspace{2cm}
\begin{minipage}[htb]{0.2\textwidth}
\begin{tikzcd}[row sep=0.5cm, column sep = 0.5cm,every arrow/.append style={dash}]
%\begin{tikzcd}[every arrow/.append style={dash}, row sep={38,between origins}, column sep={38,between origins}]
 1 \arrow[rightarrow,from=1-1,to=3-4]& {} & {} & {} &{} \\
 {} & {} &  {}  &{}  & {} \\
 2 \arrow[rightarrow,from=3-1,to=3-2] & R_{23}\arrow[rightarrow, crossing over,from=3-2,to=1-5] \arrow[rightarrow,from=3-2,to=3-4] &
 {} & R_{12} \arrow[rightarrow,from=3-4,to=3-5] \arrow[rightarrow,from=3-4,to=4-5] &{} \\
 3 \arrow[rightarrow,from=4-1,to=3-2] &{} &{} &{} &{}
\end{tikzcd}
\end{minipage}}
\caption{The pentagon equation}\label{fig01}
\end{figure}
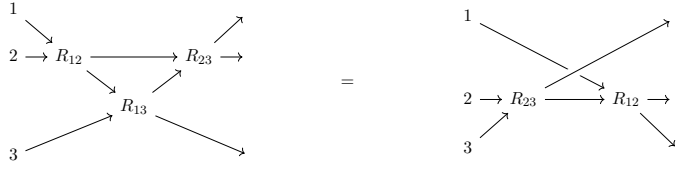
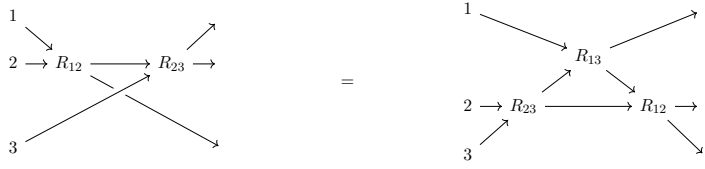
\begin{figure}[htb]\adjustbox{scale=0.6,center}{
\begin{minipage}[htb]{0.2\textwidth}
\begin{tikzcd}[row sep=0.5cm, column sep = 0.5cm,every arrow/.append style={dash}]
%\begin{tikzcd}[every arrow/.append style={dash}, row sep={38,between origins}, column sep={38,between origins}]
 1 \arrow[rightarrow,from=1-1,to=2-2]& {} & {} & {} &{} \\
 2 \arrow[rightarrow,from=2-1,to=2-2] & R_{12}\arrow[rightarrow,from=2-2,to=2-4] \arrow[rightarrow,crossing over, from=2-2,to=4-5]  & {}& R_{23} \arrow[rightarrow,from=2-4,to=2-5] \arrow[rightarrow,from=2-4,to=1-5]& {} \\
  {} & {} &   &{}  & {} \\
 3 \arrow[rightarrow,crossing over, from=4-1,to=2-4] &{} &{} &{} &{}
\end{tikzcd}
\end{minipage} \hspace{4cm} = \hspace{2cm}
\begin{minipage}[htb]{0.2\textwidth}
\begin{tikzcd}[row sep=0.5cm, column sep = 0.5cm,every arrow/.append style={dash}]
%\begin{tikzcd}[every arrow/.append style={dash}, row sep={38,between origins}, column sep={38,between origins}]
 1 \arrow[rightarrow,from=1-1,to=2-3]& {} & {} & {} &{} \\
  {} & {} &  R_{13} \arrow[rightarrow,from=2-3,to=3-4] \arrow[rightarrow,from=2-3,to=1-5] &{}  & {} \\
   2 \arrow[rightarrow,from=3-1,to=3-2] & R_{23}\arrow[rightarrow, from=3-2,to=2-3] \arrow[rightarrow,from=3-2,to=3-4] & {} & R_{12} \arrow[rightarrow,from=3-4,to=3-5] \arrow[rightarrow,from=3-4,to=4-5] &{} \\
   3 \arrow[rightarrow,from=4-1,to=3-2] &{} &{} &{} &{}
\end{tikzcd}
\end{minipage}}
\caption{The reverse pentagon equation}\label{fig02}
\end{figure}

This article is organized as follows. We begin with a brief introduction. In Section \ref{sec2}, we present our main result of this article, that is we propose a classification scheme for {\em quadrirational} pentagon maps, modulo an equivalence relation. In that respect we find four non-trivial representatives of equivalence classes of such maps characterized by their singularity sets. For
all these maps we also give the corresponding refactorization problem in terms of Lax matrices. In Section \ref{sec3} we show how the companion maps (partial inverses) of a quadrirational map entwine to provide {\em genuine entwining} (non-constant)  pentagon set theoretical solutions. In addition, we construct Yang-Baxter maps associated with quadrirational pentagon maps. %Finally, we conclude this article in Section \ref{sec4}.

\section{Quadrirational  pentagon maps} \label{sec2}

In this Section we propose a classification scheme for pentagon maps of a specific type.
We  give a full list of representatives, that satisfy the pentagon equation, modulo an equivalence relation that
is defined on birational functions on $\cp\times\cp$.

The basic definitions and properties are given in a more general setting.

\begin{definition}
A map $R$ is called   {\em pentagon map}
if it is a solution to the set-theoretic version of the  pentagon equation (\ref{pe_def}).
\end{definition}

Let $\mathbb{X}$ be a set. There is a natural equivalence relation on pentagon maps
$R:\mathbb{X}\times \mathbb{X}\rightarrow\mathbb{X}\times \mathbb{X}$.
\begin{definition}
Two  maps $R: \mathbb{X} \times \mathbb{X} \rightarrow \mathbb{X} \times \mathbb{X}$ and
$S: \mathbb{X} \times \mathbb{X} \rightarrow \mathbb{X} \times \mathbb{X}$ are called
$M\ddot{o}b$ equivalent if there exists a bijection $\phi: \mathbb{X} \rightarrow \mathbb{X}$ such that
$ R\circ  (\phi\times \phi)= (\phi\times \phi)\circ S.$
\end{definition}
In detail, if we write the maps $R, S$ in terms of their components as $R=(R_1, R_2)$ and $S=(S_1, S_2)$ then they are $M\ddot{o}b$ equivalent
if there exists a bijection $\phi: \mathbb{X} \rightarrow \mathbb{X}$ such that
$$
R_1(\phi(x),\phi(y))=\phi(S_1(x,y)) \quad \text{and} \quad R_2(\phi(x),\phi(y))=\phi(S_2(x,y))\,.
$$
The importance of the equivalence relation defined above is given in the following proposition.

\begin{prop}
Let $R: \mathbb{X} \times \mathbb{X} \rightarrow \mathbb{X} \times \mathbb{X}$ be a pentagon map and $S$
a $M\ddot{o}b$ equivalent map to $R$. Then $S$ is also a pentagon map.
\end{prop}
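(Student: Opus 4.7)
The plan is to conjugate the pentagon equation for $R$ by $\Phi := \phi\times\phi\times\phi$ on the triple product $\mathbb{X}^3$ and recover the pentagon equation for $S$.

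First I would promote the given two-coordinate intertwining $R\circ(\phi\times\phi)=(\phi\times\phi)\circ S$ to the three-coordinate setting. Writing $R_{ij},S_{ij}\colon\mathbb{X}^3\to\mathbb{X}^3$ for the usual lifts that act as $R$ (resp.\ $S$) on the $i$-th and $j$-th factors and as the identity on the remaining factor, one checks componentwise that
\begin{equation*}
R_{ij}\circ\Phi \;=\; \Phi\circ S_{ij},\qquad (i,j)\in\{(1,2),(1,3),(2,3)\}.
\end{equation*}
This is a direct verification: for instance, evaluating both sides of the $(1,2)$ relation on $(x_1,x_2,x_3)$ reduces the first two slots to the hypothesis $R_k(\phi(x_1),\phi(x_2))=\phi(S_k(x_1,x_2))$ for $k=1,2$, and the third slot is fixed by both $R_{12}$ and $S_{12}$ and commutes with $\phi$ trivially. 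The arguments for $(1,3)$ and $(2,3)$ are identical after relabelling.

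Next, since $\phi$ is a bijection, so is $\Phi$, and the intertwining relation may be rewritten as $R_{ij}=\Phi\circ S_{ij}\circ\Phi^{-1}$. Substituting into the pentagon identity $R_{12}R_{13}R_{23}=R_{23}R_{12}$ collapses each pair of adjacent $\Phi^{-1}\Phi$ factors, yielding
\begin{equation*}
\Phi\circ\bigl(S_{12}S_{13}S_{23}\bigr)\circ\Phi^{-1} \;=\; \Phi\circ\bigl(S_{23}S_{12}\bigr)\circ\Phi^{-1}.
\end{equation*}
Cancelling $\Phi$ on the left and $\Phi^{-1}$ on the right (again using bijectivity of $\phi$) gives $S_{12}S_{13}S_{23}=S_{23}S_{12}$, which is the pentagon equation for $S$.

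There is no real obstacle here; the only point that requires care is the bookkeeping in the lift from $\mathbb{X}\times\mathbb{X}$ to $\mathbb{X}^3$, to make sure that the conjugation by $\Phi$ really commutes with each $R_{ij}\mapsto S_{ij}$ individually. Once that componentwise check is written out cleanly, the rest is a purely algebraic cancellation that works in any category where the objects admit such a conjugation, so the statement is essentially a tautology about simultaneous conjugacy of maps on a Cartesian cube.
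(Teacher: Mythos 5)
Your proof is correct and follows essentially the same route as the paper: lift the conjugation $\phi\times\phi$ to $\Phi=\phi\times\phi\times\phi$, verify $R_{ij}\circ\Phi=\Phi\circ S_{ij}$ componentwise, and then conjugate the pentagon identity for $R$ to obtain the one for $S$. No gaps; the bookkeeping step you flag is exactly the point the paper's proof also makes explicit.
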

\begin{proof}
Recall that $R_{ij}:\mathbb{X} \times \mathbb{X} \times \mathbb{X} \rightarrow \mathbb{X} \times \mathbb{X} \times \mathbb{X}$
acts on the $i,j$ components as $R$ and on the rest as identity. Therefore, for any bijection $\phi: \mathbb X\rightarrow\mathbb X$,
the map $(\phi^{-1}\times\phi^{-1}\times\phi^{-1})\circ R_{ij}\circ(\phi\times\phi\times\phi)$ acts on the $i,j$ components as the conjugation
of the components of $R$ and on the rest as the identity.
It follows that if $S=(\phi^{-1}\times\phi^{-1})\circ R\circ(\phi\times\phi)$ then
\begin{gather*}
S_{12}\circ S_{13}\circ S_{23}= (\phi^{-1}\times \phi^{-1}\times \phi^{-1})\circ R_{12}\circ  R_{13}\circ R_{23}\circ
(\phi \times \phi \times \phi)\\
=(\phi^{-1}\times \phi^{-1}\times \phi^{-1})\circ R_{23} \circ R_{12}\circ (\phi\times \phi\times \phi)=
S_{23} \circ S_{12},
\end{gather*}
where we used the hypothesis that $R$ is a pentagon map.
\end{proof}

\begin{remark} \label{remark:1}
Let $\tau:\mathbb{X} \times \mathbb{X}\mapsto \mathbb{X} \times \mathbb{X}$ be the transposition map i.e. $\tau: (x,y)\mapsto (y,x).$  Then $R$ is a  pentagon map iff $T:=R\circ \tau$ satisfies the {\em braid-pentagon} equation
\begin{gather}\label{bp}
T_{12}\circ T_{23}\circ T_{12}=T_{23}\circ \tau_{12}\circ T_{23}.
\end{gather}
Clearly the  equation above is equivalent to the pentagon equation (\ref{pe_def}) and  it was in this form that the pentagon equation was  introduced in \cite{MoorSeib:89}.

Furthermore, it is easy to verify the following:
\begin{itemize}
\item a map $R: \mathbb{X} \times \mathbb{X} \rightarrow \mathbb{X} \times \mathbb{X}$ is a pentagon map iff $\tau\circ R\circ \tau$ is a reverse-pentagon map;
\item an invertible map     $R: \mathbb{X} \times \mathbb{X} \rightarrow \mathbb{X} \times \mathbb{X}$ is a pentagon map iff $R^{-1}$ is a reverse-pentagon map;
%\item an invertible map     $R: \mathbb{X} \times \mathbb{X} \rightarrow \mathbb{X} \times \mathbb{X},$ is a pentagon map iff    $\tau\circ R^{-1}\circ \tau$ is a pentagon map. The map $\tau\circ R^{-1}\circ \tau$ is referred to as the {\em opposite} pentagon map and it is denoted as $R^{op}.$  Similar statements hold for reverse-pentagon maps.
\end{itemize}
\end{remark}

In the following we restrict our considerations to rational maps of $\cp\times \cp$ into itself, therefore  $\mathbb X=\cp,$ which we identify with
$\mathbb C\cup \left\{ \infty \right\}$ with its usual operations. We recall the notion of {\em quadrirational maps} which
first appeared in \cite{Etingof_1999} under the name {\em non-degenerate rational maps} and shortly afterwards, in \cite{ABS:YB},
was renamed quadrirational maps.

\begin{definition}[\cite{Etingof_1999,ABS:YB}]\label{quadrirational}
A  map $R: \mathbb{X} \times \mathbb{X}\ni(x,y)\mapsto (u,v) \in \mathbb{X} \times \mathbb{X}$ is called quadrirational,
if both the map $R$ and the so called {\em companion map} (or {\em partial inverse})
$cR: \mathbb{X} \times \mathbb{X}\ni(x,v)\mapsto (u,y) \in \mathbb{X} \times \mathbb{X}$ are birational maps.
\end{definition}
Said differently, the birational map $R=(u,v)$ is quadrirational if for any $y\in \mathbb X$ (generic), the map $u(.,y): x\mapsto u(x,y)$ is birational 
and for any $x\in \mathbb X$ (generic), the map $v(x,.): y\mapsto v(x,y)$ is also birational.
In \cite{ABS:YB}, the quadrirational maps were
classified and it was shown that necessarily have the form  
\begin{align*}
u(x,y)=&\frac{a(y)x+b(y)}{c(y)x+d(y)},& v(x,y)=&\frac{A(x)y+B(x)}{C(x)y+D(x)},
\end{align*}
for some polynomials $a, b, \dots, D$ which are at most quadratic.

The notion of quadrirational maps can be extended to maps that act on the cartesian product $\mathbb{X} \times\ldots\times \mathbb{X}$
($n$ copies). Examples of such maps   can be found in \cite{2n-rat}, where they are referred to as {\em $2^n-$rational maps}.

Let
$R:(x,y)\mapsto (u(x,y),v(x,y))$
be a pentagon map,
that is it satisfies the pentagon equation
\begin{align}\label{pentagon}
R_{12}\circ R_{13}\circ R_{23}=R_{23}\circ R_{12}\,.
\end{align}
Then its components $u, v$ necessarily satisfy the following relations
\begin{align} \label{eq1}
  u(x, y)&=u\left(u(x, v(y, z)), u(y, z)\right),\\ \label{eq2}
  u(v(x, y), z)&=v\left(u(x, v(y, z)), u(y, z)\right),\\ \label{eq3}
  v(v(x, y), z)&=v(x, v(y, z)).
\end{align}

We immediately recognize that (\ref{eq3}) says that $v$ is an associative function. We have from \cite{Brawley} the following theorem.
\begin{theorem}[\cite{Brawley}] \label{theo0}
If $v:\cp\times\cp\rightarrow\cp$ is a nonconstant associative rational function then there exists a M\"obius transformation
$\phi:\cp\rightarrow\cp$ such that $\phi^{-1}\circ v\circ (\phi\times\phi)$
is equal to $x, y, x+y$ or $x + y- xy.$
\end{theorem}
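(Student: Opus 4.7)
The plan is to rigidify the binary operation via its left- and right-translation slices $L_y(x):=v(x,y)$ and $R_x(y):=v(x,y)$, for which associativity is equivalent to $L_z\circ L_y=L_{v(y,z)}$ and $R_x\circ R_y=R_{v(x,y)}$. As a first step, a degree argument applied to the left identity forces the generic degree $d$ of $L_y$ in $x$ to satisfy $d^2=d$, hence $d\in\{0,1\}$, and symmetrically for $R_x$. The case $d=0$ gives $v(x,y)=g(y)$ and associativity reduces to the idempotency $g(g(z))=g(z)$; since the only non-constant rational idempotent on $\cp$ is the identity, this produces the representative $v=y$, with the symmetric case giving $v=x$.

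In the remaining case both $L_y$ and $R_x$ are Möbius for generic parameter values. The assignment $\Psi:y\mapsto L_y$ is then a rational morphism $\cp\to \mathrm{PGL}_2(\mathbb{C})$, and from $L_y(x)=R_x(y)$ together with the injectivity of a generic $R_x$, the map $\Psi$ is injective on its domain of definition and hence birational onto its image $C$. The composition identity shows $C$ is closed under multiplication in $\mathrm{PGL}_2(\mathbb{C})$, so its Zariski closure $\overline{C}$ is a closed irreducible algebraic subsemigroup. Invoking the classical fact that a closed subsemigroup of an algebraic group is a subgroup, $\overline{C}$ is a closed connected one-parameter subgroup. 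Up to conjugation in $\mathrm{PGL}_2(\mathbb{C})$ such a subgroup is either the additive group $U=\{x\mapsto x+t\}$ or the multiplicative group $T=\{x\mapsto tx\}$.

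After Möbius-conjugating $v$ so that $\overline{C}$ coincides with $U$ or $T$, one has $L_y(x)=x+f(y)$ or $L_y(x)=f(y)x$ for some Möbius $f$; the homomorphism condition becomes $f(v(y,z))=f(y)+f(z)$ or $f(v(y,z))=f(y)f(z)$ respectively. In the first case $v(y,z)=f^{-1}(f(y)+f(z))$, which is Möbius-conjugate to $y+z$ via $f$; in the second, $v=yz$ is Möbius-conjugate to $y+z-yz$ via $x\mapsto 1-x$. These exhaust the four listed representatives. The main obstacle is the identification of $\overline{C}$ as a closed subgroup of $\mathrm{PGL}_2(\mathbb{C})$; once this, together with the standard classification of one-parameter subgroups, is in place, the rest of the proof is a short bookkeeping exercise.
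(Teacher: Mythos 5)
Your proposal is essentially correct, but note that the paper itself contains no proof of this statement: it is imported as a black box from \cite{Brawley}, so there is no internal argument to compare against; what you give is an independent, structural proof valid over $\mathbb{C}$ (or any algebraically closed field of characteristic zero). The degree identity $d^2=d$ correctly splits the analysis into the degenerate cases, where $v$ depends on one variable only and idempotency forces $v=x$ or $v=y$ (using that a nonconstant rational self-map of $\cp$ is surjective, so an idempotent one is the identity), and the main case where the generic left translations $L_y$ are M\"obius; realizing $\Psi:y\mapsto L_y$ as an injective rational curve $C\subset\mathrm{PGL}_2(\mathbb{C})$, passing to $\overline{C}$, and invoking the $\mathbb{G}_a$/$\mathbb{G}_m$ dichotomy for one-dimensional connected subgroups then yields $x+y$ and $xy$, the latter being conjugate to $x+y-xy$ via $x\mapsto 1-x$. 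Three points should be tightened: (i) $C$ is only \emph{generically} closed under multiplication, since $\Psi(z)\Psi(y)=\Psi(v(y,z))$ holds on a dense open set of pairs, so the semigroup property of $\overline{C}$ requires the standard closure argument that $\mu^{-1}(\overline{C})$ is closed and contains a dense subset of the irreducible variety $\overline{C}\times\overline{C}$; (ii) the normalization putting $\overline{C}$ equal to the translations or the scalings must be effected by the admissible conjugation $\phi^{-1}\circ v\circ(\phi\times\phi)$, which sends $L_y$ to $\phi^{-1}\circ L_{\phi(y)}\circ\phi$ and hence conjugates $\overline{C}$ while preserving associativity — your phrasing glosses over this; (iii) the assertion that $f$ is M\"obius needs the (easy) remark that an injective rational self-map of $\cp$ has degree one, and the case $\Psi$ constant is excluded because it would force $R_x$ to be constant, contradicting $e=1$. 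With these patches the argument is complete. Compared with the cited source, which works over arbitrary fields by more elementary manipulation of rational functions, your route trades generality for conceptual clarity, resting on two genuinely classical facts — a Zariski-closed subsemigroup of an algebraic group is a subgroup, and the connected one-dimensional subgroups of $\mathrm{PGL}_2(\mathbb{C})$ are, up to conjugacy, the translations and the scalings — both of which you invoke correctly.
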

According to the theorem above, the second argument of a pentagon map can be one of the four functions stated 
or any $M\ddot{o}b$ equivalent function to them. Since two pentagon maps are $M\ddot{o}b$ equivalent if both of their arguments are
conjugate under the same M\"obius transformation, it is natural to examine which M\"obius transformations fix the associative functions
provided in Theorem \ref{theo0}. This is addressed in the following lemma.

\begin{lemma}
Let $\phi:\cp\rightarrow\cp$ be a M\"obius transformation.
\begin{enumerate}
\item For the functions $v=x$ or $v=y$ we have $\phi^{-1}\circ v\circ\phi\times \phi=v$ (for any $\phi$).
\item For $v=x+y,\; \phi^{-1}\circ v\circ(\phi\times \phi)=v$ if and only if $\phi(x)=ax$ for some $a\in\cp$ .
\item For $v=x+y-xy,\; \phi^{-1}\circ v\circ(\phi\times \phi)= v$
if and only if $\phi(x)=x$ or $\phi(x)=\frac{x}{x-1}$.
\end{enumerate}
\end{lemma}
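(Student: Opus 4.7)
The plan is to translate the fixing condition $\phi^{-1}\circ v \circ (\phi\times\phi) = v$ into the functional equation $\phi(v(x,y)) = v(\phi(x),\phi(y))$ and then, in each of the three cases, use standard facts about the rational form $\phi(x)=(\alpha x+\beta)/(\gamma x+\delta)$ to pin down $\phi$.

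Part (1) is immediate: since $v(x,y)=x$ does not involve $y$, the condition reduces to $\phi^{-1}(\phi(x))=x$, which holds for every Möbius $\phi$; the case $v=y$ is symmetric. For part (2) with $v=x+y$, the condition becomes $\phi(x+y)=\phi(x)+\phi(y)$, i.e.\ $\phi$ must be an additive Möbius transformation of $\cp$. I would first set $x=y=0$ to deduce $\phi(0)=0$, which forces $\beta=0$; then differentiate (or equivalently plug in generic $x,y$) to see that $\phi(x)=\alpha x/(\gamma x+\delta)$ is additive only if $\gamma=0$, yielding $\phi(x)=(\alpha/\delta)x$, i.e.\ $\phi(x)=a x$.

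Part (3) is the main case. The key observation is that the Möbius transformation $\psi(x):=1-x$ conjugates the operation $v(x,y)=x+y-xy$ into ordinary multiplication, because $1-(x+y-xy)=(1-x)(1-y)$. Setting $\Phi:=\psi\circ\phi\circ\psi^{-1}$ (which is again a Möbius transformation, and equals $\phi$ itself iff $\phi$ fixes $v$), the invariance condition transforms into the multiplicativity condition $\Phi(XY)=\Phi(X)\Phi(Y)$ on $\cp$. Plugging in $X=Y=1$ gives $\Phi(1)=1$, and multiplying $\Phi(0)\Phi(Y)=\Phi(0)$ forces $\Phi(0)\in\{0,\infty\}$ (the third possibility $\Phi\equiv 1$ is excluded since $\Phi$ is a bijection). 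A short case analysis on $\Phi(X)=(\alpha X+\beta)/(\gamma X+\delta)$ in each of these two situations shows that the only multiplicative Möbius transformations of $\cp$ are $\Phi(X)=X$ and $\Phi(X)=1/X$. Conjugating back with $\psi$ (noting $\psi^{-1}=\psi$) yields $\phi(x)=x$ from $\Phi=\mathrm{id}$, and $\phi(x)=1-\tfrac{1}{1-x}=\tfrac{x}{x-1}$ from $\Phi(X)=1/X$.

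The only genuine subtlety is the insistence on working over $\cp$ rather than $\mathbb{C}$: allowing $\Phi(0)=\infty$ is what produces the non-trivial solution $\phi(x)=x/(x-1)$, and overlooking this case would lead one to conclude incorrectly that the identity is the only symmetry. Apart from this, each step reduces to straightforward manipulation of the linear-fractional form, so no deeper obstacle is expected.
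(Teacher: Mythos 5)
Your proposal is correct and takes essentially the same route as the paper: part (2) via additivity of $\phi$, and part (3) by conjugating $x+y-xy$ to multiplication (the paper uses $\psi(x)=\frac{x-1}{x}$ instead of your $\psi(x)=1-x$, an immaterial difference) and then classifying the multiplicative M\"obius transformations as $X$ and $1/X$ before conjugating back. The only blemish is the unused parenthetical claim that $\Phi=\psi\circ\phi\circ\psi^{-1}$ equals $\phi$ exactly when $\phi$ fixes $v$, which is false (for $\phi(x)=\frac{x}{x-1}$ one gets $\Phi(X)=1/X\neq\phi(X)$), but it plays no role in your argument.
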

\begin{proof}
The proof of item (1) is straightforward.
The proof of item (2) follows easily by observing that such a $\phi$ is linear, i.e.
$\phi(x)+\phi(y)=\phi(x+y)$ for all $x, y \in\cp$.
For the proof of item (3) observe that if $\psi=\frac{x-1}{x},$ then  $\psi^{-1}\circ v\circ(\psi\times \psi)=xy$ and the only M\"obius transformations
$\kappa:\cp\rightarrow\cp$ that fix $xy$ (i.e. $\kappa(xy)=\kappa(x)\kappa(y)$) are $\kappa(x)=x$ and $\kappa(x)=\frac{1}{x}$. Therefore
$\phi(x)=x$ or $\phi(x)=\frac{x}{x-1}$.
\end{proof}

Notice that for $\delta\neq 0$, if $\phi_\delta(x)=\frac{x-1}{\delta x}$ and $v_\delta=x+y-\delta xy$, then $\phi_\delta^{-1}\circ v_\delta\circ (\phi_\delta\times\phi_\delta)=xy$.
Thus the functions $v_\delta$ are all $M\ddot{o}b$ equivalent with each other and with $xy$. Clearly they are all associative.
Furthermore, by Definition \ref{quadrirational} a map with second argument $x$ cannot be qudrirational as its companion map is not birational in this case.
Therefore for any quadrirational pentagon map $R=(u,v)$, there are only three possibilities for $v$, namely $v=y$ and $v=x+y-\delta xy$ with $\delta=0, 1$.

The following theorem is the main result of the paper.

%Note that if $\phi:\cp\rightarrow\cp$ is any M\"obius transformation then for $v=x$ or $v=y$ we have
%$\phi^{-1}\circ v\circ\phi\times \phi=v$.
%For $v=x+y$ then $\phi^{-1}\circ v\circ(\phi\times \phi)=v$ if and only if $\phi(x)+\phi(y)=\phi(x+y)$ which implies that $\phi(x)=ax,$
%for $a\in\cp$.
%For $v=x+y-xy$ we first notice that if $\psi=\frac{x-1}{x},$ then  $\psi^{-1}\circ v\circ(\psi\times \psi)=xy$ and the only M\"obius transformations
%that fix $xy$ (under the equivalence we use) are the identity $x$ and $\frac{1}{x}$. Therefore $\phi^{-1}\circ v\circ(\phi\times \phi)= v$
%if and only if $\phi(x)=x$ or $\phi(x)=\frac{x}{x-1}$.

%Another observation is that for $\delta\neq 0$, if $\phi_\delta(x)=\frac{x-1}{\delta x}$ and $v_\delta=x+y-\delta xy$, then
%$\phi_\delta^{-1}\circ v_\delta\circ (\phi_\delta\times\phi_\delta)=xy$. Therefore the functions $v_\delta$ are all equivalent with each other
%and with $xy$. Clearly they are all associative.

%If $R=(u,v)$ is a quadrirational pentagon map then according to Theorem \ref{theo0}, we only have three possibilities for
%$v$, namely $v=y$ and $v=x+y-\delta xy$ with $\delta=0, 1$. Note that $v=x$ does not give rise to a quadrirational map. The following theorem is the main %result of the paper.

\begin{theorem} \label{theo1}
Any quadrirational pentagon map $R:\cp\times \cp\rightarrow\cp\times\cp$, with $R=(u,v)$ is $M\ddot{o}b$ equivalent to exactly one
of the following  maps:
\begin{align*}
&u= \frac{x}{x+y-x y }, && v=x+y-x y, && (S_I)\\
&u=x ,&& v=x+y-\delta x y, && (S_{II}^\delta)\\
%&u=x ,&& v=x+y, && (S_{III})\\
&u=\frac{x}{y}, && v=y, && (S_{III}) \\
&u=x-y, && v=y, && (S_{IV}) %\\
%&u=x, && v=y, && (S_{V})
\end{align*}
where $\delta=0,1$.
\end{theorem}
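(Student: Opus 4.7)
The plan is to leverage the analysis preceding the theorem, which already reduces the second component $v$ of any quadrirational pentagon map to one of three canonical forms under $M\ddot{o}b$ equivalence: $v=y$, $v=x+y$, and $v=x+y-xy$ (the fourth possibility $v=x$ from Theorem \ref{theo0} is ruled out because the corresponding companion map cannot be birational). For each choice of $v$ I would substitute into the coupled functional equations (\ref{eq1})--(\ref{eq2}) together with the quadrirational form $u(x,y)=(a(y)x+b(y))/(c(y)x+d(y))$ (with $a,b,c,d$ polynomials of degree at most two, by the classification quoted just before the theorem), clear denominators, and compare monomials in $x$, $y$, $z$ to obtain a polynomial system in the unknown coefficients.

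For $v=y$, equation (\ref{eq2}) is automatically satisfied, and only the symmetric relation $u(x,y)=u(u(x,z),u(y,z))$ remains. Setting $y=z$ forces $u(z,z)$ to play the role of a right identity for $u$; combining this with the M\"obius dependence of $u$ on each argument pins $u$ down to $x/y$ or $x-y$, yielding the representatives $S_{III}$ and $S_{IV}$. For $v=x+y$, degree considerations in (\ref{eq1})--(\ref{eq2}) force $u$ to be affine in $x$, and the residual M\"obius freedom $\phi(x)=ax$ identified by the preceding lemma normalizes the solution to $u=x$, i.e.\ $S_{II}^{0}$. For $v=x+y-xy$ both equations are nontrivial, so the full polynomial system has to be solved; the residual two-element M\"obius group $\{\mathrm{id},\,x/(x-1)\}$ then selects canonical representatives, producing $u=x$ (namely $S_{II}^{1}$) and $u=x/(x+y-xy)$ (namely $S_{I}$).

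A concluding verification step checks that each of the five listed maps satisfies (\ref{pe_def}) by direct substitution, and that no two representatives are $M\ddot{o}b$ equivalent: different $v$-components are already $M\ddot{o}b$ inequivalent, and in the pair $S_{I}$ and $S_{II}^{1}$ (which share the same $v$) the singular loci of $u$ distinguish them. The main obstacle is the case $v=x+y-xy$: after clearing denominators the identities are polynomial of moderately high degree in three variables and four families of coefficient polynomials, so careful bookkeeping is required to guarantee that no branch is missed and that the residual $M\ddot{o}b$ normalization is applied consistently across solutions. The remaining cases reduce to essentially elementary manipulations of the functional equations combined with the "M\"obius in each argument" constraint imposed by quadrirationality.
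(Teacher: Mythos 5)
Your overall route is the same as the paper's: reduce $v$ to the Brawley normal forms via Theorem \ref{theo0}, discard $v=x$ on quadrirationality grounds, substitute the ansatz $u(x,y)=\frac{a(y)x+b(y)}{c(y)x+d(y)}$ into (\ref{eq1})--(\ref{eq2}), solve the resulting coefficient system case by case, and use the M\"obius transformations fixing $v$ to normalize the branches. However, the uniqueness half of the statement (``exactly one'') is not fully covered by your argument: besides the pair $S_I$, $S_{II}^1$, the pair $S_{III}$, $S_{IV}$ also shares the same second component $v=y$, and you give no reason why $u=x/y$ and $u=x-y$ are inequivalent. The paper handles this by noting that an equivalence would require a bijection $\phi$ of $\cp$ with $\phi(x-y)=\phi(x)/\phi(y)$, forcing $\phi(x)=e^{\lambda x}$, which is not M\"obius; alternatively your singularity-count argument does extend to this pair (the singular sets $\{(0,0),(\infty,\infty)\}$ and $\{(\infty,\infty)^2\}$ differ), but you must invoke it there explicitly.

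A second, more substantive under-estimate is the case $v=y$, which you describe as essentially elementary and settle by the right-identity observation ``plus the M\"obius dependence of $u$''. That shortcut is not a proof: solving (\ref{f21}) with the quadrirational ansatz yields, besides the trivial $u=x$ (the identity map, which has to be set aside explicitly, as the paper does after its proof), several parameter-dependent families such as $u=\frac{x-y}{1-ay-bxy}$ and a three-parameter family of the form $u=\frac{ax-by-cxy+1}{cx-b^2y+aby-cy-bcxy+b}$, and collapsing all of these onto the two representatives $x-y$ and $x/y$ requires exhibiting explicit conjugating M\"obius maps (in the paper these involve roots of quadratics such as $r^2+ar-b=0$). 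So the bulk of the bookkeeping lies in this case, not in $v=x+y-xy$ as you predict. On the plus side, your explicit treatment of $v=x+y$ (forcing $u=x$, hence $S_{II}^0$) fills a case the paper's written proof passes over, and your use of the residual group $\{\mathrm{id},\,x/(x-1)\}$ to merge the extra branch ($u=-x(1-y)/y$ when $\delta=1$) with $S_I$ matches the paper's computation with general $\delta\neq0$.
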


\begin{proof}
In order for a rational map $R:(x,y)\mapsto (u,v)$ to be a pentagon map, the functional equations (\ref{eq1})--(\ref{eq3}) have to be satisfied.
The functional equation (\ref{eq3}) is satisfied when $v$ is an associative rational function. In what follows, for any of the
representatives (except $v(x,y)=x,$ that does not give quadrirational maps) of associative rational functions given in Theorem \ref{theo0}, we find all rational functions $u$ that satisfy the equations
(\ref{eq1}) and (\ref{eq2}).

We begin with the case where $v(x,y)=x+y-\delta xy$ ($\delta\neq 0$), which as we saw is equivalent with $x+y-xy$.
Then equation (\ref{eq3}) is identically satisfied while (\ref{eq1}) and (\ref{eq2}) respectively read
\begin{align}\label{f11}
u(x,y)&=u\left(u(x,y+z-\delta yz),u(y,z)\right)\\ \label{f12}
u(x+y-\delta xy,z)&=u(x,y+z-\delta yz)+u(y,z)-\delta u(x,y+z-\delta yz)u(y,z).
\end{align}
Because of the quadrirationality of $R$, the rational function $u$ is of the form
$
u(x,y)=\frac{a(y)x+b(y)}{c(y)x+d(y)},
$
where the polynomials $a, b, c$ and $d$ are at most quadratic in $y$. Substituting this form of $u$ into the equations (\ref{f11}) and
(\ref{f12}), we obtain a system of algebraic equations on the coefficients of the polynomials $a, b, c$ and $d$.
Solving this system we obtain that $u$ is either $x$ or $\frac{1}{\delta}\frac{x}{x+y-\delta x y }$ or $-\frac{x}{\delta}\frac{1-\delta y}{y}.$
The first choice  $u=x$, leads exactly  to  $(S_{II}^\delta)$ pentagon map.
From the second choice $u=\frac{1}{\delta}\frac{x}{x+y-\delta x y }$, we obtain the map
\begin{align*}
&\hat S_I:(x,y)\mapsto (u,v)=\left(\frac{1}{\delta}\frac{x}{x+y-\delta x y },x+y-\delta x y\right),
\end{align*}
which is $M\ddot{o}b$ equivalent to the  $(S_{I})$ pentagon map. Indeed, under the scaling
\begin{gather*}
\phi:\mathbb{CP}^1 \ni u\mapsto \frac{1}{\delta} u \in \mathbb{CP}^1,
\end{gather*}
we have
\begin{gather*}
S_I=(\phi^{-1}\times \phi^{-1})\circ \hat S_I\circ(\phi\times \phi).
\end{gather*}
The third choice of $u$ gives rise to the pentagon map
\begin{align*}
&\hat s_I:(x,y)\mapsto (u,v)=\left(-\frac{x}{\delta}\frac{1-\delta y}{y},x+y-\delta x y\right),
\end{align*}
which is $M\ddot{o}b$ equivalent to $\hat S_I$ and therefore to the pentagon map $S_I$ of the theorem as well.
Indeed for
\begin{gather} \label{mob00}
\phi:\mathbb{CP}^1 \ni u\mapsto -\frac{u}{1-\delta u} \in \mathbb{CP}^1\,,
\end{gather}
we have
\begin{gather*}
\hat s_I=(\phi^{-1}\times \phi^{-1})\circ \hat S_I\circ (\phi\times \phi)\,.
\end{gather*}

Note that the M\"obius transformations used above to show the equivalence of the functions $u$, fix the functions $v$, in accordance to
Theorem \ref{theo0}. They where derived by examining the sets of singular points of  $\hat S_I$ and $\hat s_I$ which are respectively
\begin{align*}
\Sigma_{\hat S_I^a}&=\{(0,0), (\infty,1/\delta ),(1/\delta ,\infty)\},&
\Sigma_{\hat s_I^a}&=\{(0,0), (1/\delta ,\infty), (\infty,1/\delta )\}.
\end{align*}
The M\"obius transformation (\ref{mob00}) fixes $(0,0)$ and permutes the remaining singular points of the previous sets.

Let us now consider the case  $v(x,y)=y$. In this case, equation (\ref{eq3}) is (in accordance to Theorem \ref{theo0}) satisfied,
while equation (\ref{eq2}) reduces to the identity $u(y,z)=u(y,z)$.  So we only have (\ref{eq1}) which reads
\begin{align}
\label{f21}
u(x,y)&=u\left(u(x,z),u(y,z)\right)\,.
\end{align}
Because of the quadrirationality of $R$, and therefore the specific form of $u$ used above, we obtain as before the following seven
pentagon maps $R_i:(x,y)\mapsto (u_i,y),$ $i=1,\ldots, 7$
\begin{align*}
u_1&=x\,, &
u_2&=x-y+a\,, &
u_3&=\frac{xy}{-x+y+bxy}\,,& \\
u_4&=\frac{x}{y}\,, &
u_5&=\frac{x-y}{1-ay-bxy} \; (|a|+|b|\neq0)\,, &
u_6&=\frac{x-axy}{ax+by-a^2xy-abxy}\; (ab\neq 0)\,,& \\
u_7&=\mathrlap{\frac{ax-by-cxy+1}{cx-b^2y+aby-cy-bcxy+b}\; (ab\neq c)\,,}
\end{align*}
where $a,b,c\in \mathbb{CP}^1\setminus\{\infty\}$ are constants.
As we mentioned after Theorem \ref{theo0}, $v=y$ is fixed under any M\"obius transformation. Consider the following M\"obius transformations
\begin{align*}
\phi_{2,3}(x,y)&=\frac{1+(a-b)x}{x}\,,&
\phi_{4,5}(x,y)&=\frac{1-(a+r)x}{1+rx}\;,& \phi_{4,6}(x,y)&=\frac{bx}{1-ax}\,,\\
\phi_{4,7}(x,y)&=\mathrlap{\frac{1}{c-ab}\frac{(b^3+rb^2-ab^2+2bc+rc)x-(b^2+c+ra+rb)}{1+rx}\,,}
\end{align*}
where for  $\phi_{4,5},$ $r$ is defined by $r^2+ar-b=0$ and for  $\phi_{4,7},$ $r$ is defined by $r^2+(b-a)r-c=0$.
It is straightforward to verify that $\phi_{i,j}^{-1}\circ u_i\circ(\phi_{i,j}\times \phi_{i,j})=u_j$
for the $\phi_{i,j}, u_i$ and $u_j$ given above. We will show that $u_1, u_2$ and $u_4$ are not $M\ddot{o}b$ equivalent.
Since $u_1=x$ depends on $x$ only, it is fixed under any invertible function on $\cp$ and therefore it is
only equivalent to itself. We show that $\frac{x}{y}$ and $x-y$ are not equivalent by any invertible function
$\phi:\cp\rightarrow\cp$. Such a function has to satisfy $\phi(x-y)=\frac{\phi(x)}{\phi(y)}$ for all
$x,y\in\cp$ and therefore $\phi(x)=e^{\lambda x}$ which is not a bijection. This completes the proof.
\end{proof}
In the previous theorem we have not considered the case $v(x,y)=x,$ since it results to non-quadrirational maps.
Furthermore, for $v(x,y)=x$ equation (\ref{eq3}) is identically satisfied, while (\ref{eq1}) and (\ref{eq2}) respectively read
\begin{align} \label{id1}
u(x,y)&=u(u(x,y),u(x,y)),\\ \label{id2}
u(x,z)&=u(x,y).
\end{align}
From (\ref{id2}) we deduce that $u$ is a function of a single variable i.e. $u(x,y)=g(x).$ Then $(\ref{id1})$ reads
$g(x)=g(g(x)),$ so $g(x)$ has to be idempotent. So we obtain the following rational (but not birational)
map that satisfies the pentagon equation
\begin{gather*}
T:(x,y)\mapsto (g(x),x), \qquad \text{where} \qquad g(g(x))=g(x).
\end{gather*}
This is a simple  example where idempotency appears in relation of pentagon maps. For  general treatment on idempotency and pentagon
maps we refer to \cite{Mazzotta_2023}, while for involutivity and pentagon maps, see \cite{Colazzo_2023}.

Also we have not considered the trivial solution $R(x,y)=(x,y)$ which appeared in the proof of the previous theorem.

The following remarks are in order.
\begin{itemize}
\item The inverse maps $S_{I-IV}^{-1}$ of Theorem \ref{theo0} satisfy the reverse pentagon equation (\ref{pe_def_d}),
while the mappings $S_{I-IV}\circ \tau$ satisfy the braid-pentagon equation (\ref{bp}).
\item The sets of singular points of the mappings $S_{I-IV}$ respectively are
\begin{align*}
 \Sigma_{S_I}&=\{(0,0), (\infty,1),(1,\infty)\}, &\Sigma_{S_{II}^\delta}&=\{(\infty,1/\delta),(1/\delta,\infty)\},\\
 \Sigma_{S_{III}}&=\{(0,0),(\infty,\infty)\}, &\Sigma_{S_{IV}}&=\{(\infty,\infty)^2\}.
\end{align*}
The singularity pattern of the maps confirms the non equivalence of the four representatives of the classification in Theorem \ref{theo1}.
Furthermore, mappings $S_{II}^\delta, S_{III}$ and $S_{IV}$ can be obtained from $S_I$ by a degeneration procedure (see Figure \ref{fig3}).
Indeed, $S_{III}$ is obtained from $S_I$ by setting $(u,x,v,y)\mapsto (\epsilon u, \epsilon x,v,y)$ and then sending $\epsilon \rightarrow 0$.
Mapping $S_{II}^1$ is derived from $S_I$ by setting $(u,x,v,y)\mapsto (u,x,1-\epsilon(1-v),1-\epsilon(1-y))$ and then sending
$\epsilon \rightarrow 0,$ while by setting $(u,x,v,y)\mapsto (\epsilon u,\epsilon x, \epsilon v, \epsilon y),\; \epsilon \rightarrow 0,$
we obtain $S_{II}^0$ from $S_{II}^1$. Finally, by setting
$(u,x,v,y)\mapsto (1+\epsilon u,1+\epsilon x,\frac{1}{1-\epsilon v},\frac{1}{1-\epsilon y}),\; \epsilon\rightarrow0$
we obtain $S_{IV}$ from $S_{III}$.
\begin{figure}[htb]\adjustbox{scale=0.9,center}{
\begin{tikzcd}[row sep=0.6cm, column sep = 0.8cm,every arrow/.append style={dash}]
{}& S_{II}^1  \arrow[rightarrow,from=1-2,to=1-3]& S_{II}^0 \\
S_I \arrow[rightarrow,from=2-1,to=1-2]  \arrow[rightarrow,from=2-1,to=3-2]&{} & {}\\
{}&S_{III} \arrow[rightarrow,from=3-2,to=3-3] & S_{IV}
\end{tikzcd}}
\caption{Degeneration diagram}\label{fig3}
\end{figure}
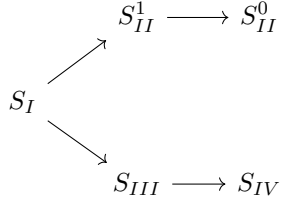

%1 \arrow[rightarrow,from=1-1,to=2-3]& {} & {} & {} &{} \\
\item The results of the theorem  can be  extended to the non-abelian setting by considering $\mathbb{X}=\mathbb{A}^\times$
a division ring e.g. bi-quaternions. Or we can consider a more general setting where $\mathbb{A}^\times$  stands for the
subgroup of invertible matrices of the algebra $\mathbb{A}$ of  matrices of order $N$. In detail, the non-abelian
analogues of the pentagon maps of Theorem \ref{theo0} read
\begin{align*}
&u= x(x+y-y x)^{-1},&& v=x+y-yx, && (\mathfrak{S}_I)\\
&u=x ,&& v=x+y-\delta y x, && (\mathfrak{S}_{II}^\delta)\\
%&u=x ,&& v=x+y, && (S_{III})\\
&u=xy^{-1} ,&& v=y, && (\mathfrak{S}_{III}) \\
&u=x-y ,&& v=y, && (\mathfrak{S}_{IV})%\\
%&u=x ,&& v=y, && (\mathfrak{S}_{V})
\end{align*}
where $\delta=0,1$.
\item Mapping $S_I$ was firstly introduced in \cite{Kashaev:1999} inside the context of quantum dilogarithm.
Furthermore, $S_I$  also results from the evolution of matrix KP solitons \cite{Dimakis:2018}. The non-abelian form of
$S_I$ that is  $\mathfrak{S}_I,$ arises as a reduction of the so-called {\em normalization map} \cite{Doliwa:2014p,Doliwa:2020},  that also  allows the transition to solutions of the operator form of the pentagon equation. In \cite{Doliwa:2014p}, $\mathfrak{S}_I,$ was also related to the non-abelian Hirota-Miwa equation.  Mapping $\mathfrak{S}_{II}^\delta$ (in an equivalent form) first appeared in \cite{Kashaev:1998}.
\item When a pentagon map is a linear map we can associate a linear operator to it. In that respect, solutions that satisfy the operator version of the pentagon equation can be obtained by the linearization of the pentagon maps. Actually, through this procedure  the so-called {\em permutation-type solutions} of the pentagon equation are obtained. These type of solutions were first introduced  in \cite{Hietarinta:1997} inside the context of the quantum Yang-Baxter equation.
\item The   pentagon maps of Theorem \ref{theo0}, satisfy the so-called {\em ten-term relation} \cite{Kashaev:1998}, therefore    they correspond to four equivalence classes of non-Abelian tetrahedron maps i.e. maps from $T:\mathbb{A}^\times\times\mathbb{A}^\times\times\mathbb{A}^\times$ to itself that satisfy $T_{123}\circ T_{145}\circ T_{246}\circ T_{356}=T_{356}\circ T_{246} \circ T_{145} \circ T_{123}.$ For recent developments on tetrahedron maps we refer to \cite{Kass1,Dimakis_Korepanov:2021,Talalaev_2021,Rizos:2022,Rizos:2023}.
\end{itemize}

\subsection{Lax matrices for the quadrirational pentagon maps}
In \cite{Kassotakis:2023_p} there were introduced pairs of matrices of a specific form that their associated refactorization
problems turned equivalent to pentagon maps. These pairs of matrices serve as the   {\em Lax pairs} of the associated
discrete dynamical systems.

\begin{prop} \label{refa}
The pentagon maps $\mathfrak{S}_I, \mathfrak{S}_{II}^\delta,$ $\mathfrak{S}_{III}$ and $\mathfrak{S}_{IV},$ that serve as the non-Abelian analogues of the
respective maps of Theorem \ref{theo1}, are equivalent to the refactorization problems
\begin{gather}\label{ZC}
  A(u)B(v)=B(y)A(x),
\end{gather}
where the matrices $A$ and $B$ respectively read
\begin{align*}
&A(x):=\begin{pmatrix}
        1-x & x \\
        0 & 1
      \end{pmatrix},&& B(x):=\begin{pmatrix}
        1 & 0 \\
        1-x & x
      \end{pmatrix}, && (\mathfrak{L}_I)\\
    &  A(x):=\begin{pmatrix}
        1-\delta x & 0 \\
        0 & 1
      \end{pmatrix},&& B(x):=\begin{pmatrix}
        1 & 0 \\
        1-\delta x & 1
      \end{pmatrix}, && (\mathfrak{L}_{II}^\delta)\\
&A(x):=\begin{pmatrix}
        1 & x \\
        0 & 1
      \end{pmatrix},&& B(x):=\begin{pmatrix}
        1 & 0 \\
        0 & x
      \end{pmatrix}, && (\mathfrak{L}_{III})\\
&A(x):=\begin{pmatrix}
        1 & e^x \\
        0 & 1
      \end{pmatrix},&& B(x):=\begin{pmatrix}
        1 & 0 \\
        0 & e^x
      \end{pmatrix}, && (\mathfrak{L}_{IV})
\end{align*}
where $\delta \neq 0$.
\begin{proof}
The proof follows by direct computation.
\end{proof}
\end{prop}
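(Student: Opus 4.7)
The plan is to dispatch each of the four cases by direct matrix multiplication and identification of entries. For each pair $(A,B)$ I write out both $A(u)B(v)$ and $B(y)A(x)$ as $2\times 2$ matrices with entries in the non-abelian algebra $\mathbb{A}$, and impose the four entry-wise identities. The resulting system should determine $(u,v)$ uniquely in terms of $(x,y)$ and reproduce the formulas $\mathfrak{S}_I$--$\mathfrak{S}_{IV}$ exactly.

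I would begin with $\mathfrak{L}_I$ as the model case. Expanding the two products yields
\[
B(y)A(x)=\begin{pmatrix} 1-x & x \\ (1-y)(1-x) & x+y-yx \end{pmatrix},\qquad
A(u)B(v)=\begin{pmatrix} 1-uv & uv \\ 1-v & v \end{pmatrix}.
\]
Reading off the $(2,2)$ entry gives $v=x+y-yx$, matching the second component of $\mathfrak{S}_I$. The $(1,2)$ entry gives $uv=x$, hence $u=x(x+y-yx)^{-1}$, matching the first component. The $(1,1)$ and $(2,1)$ entries are then automatically consistent (using $1-(x+y-yx)=(1-y)(1-x)$), so no additional constraints arise.

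The remaining three cases are structurally simpler because the matrices are triangular or diagonal with a zero in one off-diagonal slot, so at most two of the four entries carry information. For $\mathfrak{L}_{II}^\delta$, comparing the $(1,1)$ and $(2,1)$ entries yields $u=x$ and $v=x+y-\delta yx$, i.e.\ $\mathfrak{S}_{II}^\delta$; for $\mathfrak{L}_{III}$ the nontrivial entries $(1,2)$ and $(2,2)$ give $v=y$ and $uy=x$, so $u=xy^{-1}$, reproducing $\mathfrak{S}_{III}$; and $\mathfrak{L}_{IV}$ is formally identical to $\mathfrak{L}_{III}$ under $x\mapsto e^x$, which, after taking the logarithm of the scalar equation $e^u e^v=e^{v}e^x$ in the abelian subalgebra generated by the entries, returns $u=x-y$, $v=y$, that is $\mathfrak{S}_{IV}$.

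The principal subtlety, and essentially the only place where care is required, is tracking the order of non-commuting factors: one must verify that it is $yx$ (not $xy$) that appears in the expression for $v$, and that the inverse in $u=x(x+y-yx)^{-1}$ sits on the right of $x$ rather than on the left. These orderings are forced by the position of $A(u),B(v)$ on the left of $(\ref{ZC})$ versus $B(y),A(x)$ on the right, and in the abelian specialization they collapse to the formulas of Theorem~\ref{theo1}. One should also note that the derivation tacitly uses invertibility in $\mathbb{A}^\times$ of the quantities $x+y-yx$ (in case $\mathfrak{S}_I$) and $y$ (in case $\mathfrak{S}_{III}$) required to solve for $u$; this is the generic condition under which the refactorization problem $(\ref{ZC})$ is equivalent to the map, and is exactly the analogue of the quadrirationality condition of Definition~\ref{quadrirational}.
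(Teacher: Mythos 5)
Your proposal is correct and is exactly the paper's argument: the paper's proof is just ``direct computation'', and you have carried out that computation, with the entry-wise identifications for $\mathfrak{L}_I$--$\mathfrak{L}_{III}$ and the orderings $v=x+y-yx$, $u=x(x+y-yx)^{-1}$ all checking out. One slip to fix in the $\mathfrak{L}_{IV}$ case: the scalar relations coming from the $(1,2)$ and $(2,2)$ entries are $e^u e^v=e^x$ and $e^v=e^y$, not $e^u e^v=e^v e^x$ (the latter would yield $u=x$ rather than the intended $u=x-y$), and note that passing from $e^u=e^x e^{-y}$ to $u=x-y$ genuinely requires the commutativity you invoke, since in the non-abelian setting $e^x e^{-y}\neq e^{x-y}$ in general.
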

 An alternative interpretation of the matrix refactorization
problem (\ref{ZC})  is the following parameter dependent
associativity condition
\begin{align*}
p \circ_x ( q \circ_y  r ) = (p \circ_u  q) \circ_v r,
\end{align*}
for $p,q,r$ vectors in some vector space $\mathcal{V}$.
This
interpretation was introduced in \cite{Hoissen2023} (c.f. \cite{Dimakis:2015})
and it is motivated by considering the zero curvature condition
as a {\em local tetragon equation}.
 Then the above associativity
condition for the binary operations defined by
\begin{align*}
 p \circ_x q&= x\,p+(1-x)\,q, & p \circ_x q&= p+(1-\delta
x)\,q,&
p \circ_x q&= x\,p+q, & p \circ_x q&= e^x\,p+q
\end{align*}
delivers the respective maps $\mathfrak{S}_{I-IV}.$ Note that when $p,q$ are considered as elements of a module over the ring 
$\mathbb{Z}[x,x^{-1}]$ of Laurent polynomials in the variable $x$, the first binary operation from above coincides with the left-distributive binary operation of the Alexander quandle.
A geometric
interpretation of the associativity condition is provided by a $(6_2,
4_3)$ configuration on the plane (the {\em Veblen or  Menelaus configuration}), which consists of 6 points and 4
lines, where each point is incident with exactly two lines, each line
with exactly 3 points, and the binary operation represents the
collinearity of three points $p$, $q$, $p\circ_x q$. Then the pentagon equation reads as a consistency condition on the {\em Desargues configuration} $(10_3)$ that contains five Veblen configurations, see
\cite{Doliwa:2014p}.

\begin{remark}
We note that $M\ddot{o}b$ equivalent pentagon maps admit $M\ddot{o}b$ equivalent refactorization problems. Indeed, let $T:(x,y)\mapsto (u,v)$ be a $M\ddot{o}b$ equivalent  pentagon map  to the pentagon map $S:(x,y)\mapsto (u,v),$ that is there exists a bijection $\phi$ such that $(\phi\times\phi)\circ S=T\circ (\phi\times\phi).$ Let also the pentagon map $S$ be equivalent to the refactorization problem $A(u)B(v)=B(y)A(x)$. Then the pentagon map $T$ is equivalent to the refactorization problem $A(\phi(u))B(\phi(v))=B(\phi(y))A(\phi(x)).$
\end{remark}
\begin{remark}
The refactorization problems (\ref{ZC}) allow us to obtain the invariant relations of the associated maps.
For example, from $\mathfrak{L}_I$ we obtain that
\begin{gather*}
 Tr\left(A(u)B(v)\right)=Tr\left(B(y)A(x)\right),
\end{gather*}
where with $Tr$ we denote the trace of a matrix, which
reads as the invariant relation
\begin{gather} \label{ab_l}
 (1-u)v=y(1-x),
\end{gather}
of mapping $\mathfrak{S}_I.$
\end{remark}
The following Proposition states that mapping $S_I$ is a Liouville integrable map \cite{ves2}.
\begin{prop}
Mapping $S_I:(x,y)\mapsto (u,v)$  preserves the
Poisson structure
$
 \Omega:=xy(1-x)(1-y)\frac{\partial}{\partial x}\wedge \frac{\partial}{\partial y},
$
and admits the function $h(x,y):=y(1-x)$ as an invariant, hence it is a Liouville integrable map.
\end{prop}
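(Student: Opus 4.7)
The plan is to verify the two ingredients separately---invariance of $h$ under $S_I$ and preservation of $\Omega$---and then to invoke the standard criterion that a two-dimensional birational map admitting a preserved (generically non-degenerate) Poisson structure together with a non-constant first integral is Liouville integrable in the sense of Veselov. Since the phase space is two-dimensional, no involutivity condition among invariants needs to be checked.

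For the invariant, I would first extract three useful algebraic identities from the explicit formulas $u=x/(x+y-xy)$ and $v=x+y-xy$, namely $uv=x$, $1-v=(1-x)(1-y)$ and $1-u=y(1-x)/v$. With these in hand, $h(u,v)=v(1-u)$ collapses immediately to $y(1-x)=h(x,y)$. Alternatively, this is exactly the trace relation $(1-u)v=y(1-x)$ of equation (\ref{ab_l}) obtained from the Lax matrices $\mathfrak{L}_I$, so the invariance of $h$ requires no further computation.

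For preservation of $\Omega$, I would use the standard criterion that a bivector $\omega(x,y)\,\partial_x\wedge\partial_y$ is invariant under $(x,y)\mapsto(u,v)$ if and only if $\omega(x,y)\,\det J=\omega(u(x,y),v(x,y))$, where $J$ is the Jacobian matrix of the map. The key computational step is therefore the evaluation of $\det J$. A short calculation using $\partial_x u=y/v^2$, $\partial_y u=-x(1-x)/v^2$, $\partial_x v=1-y$, $\partial_y v=1-x$ yields $\det J=(1-x)/v$. Then, substituting the three identities above into $\omega(u,v)=uv(1-u)(1-v)$, one finds $\omega(u,v)=xy(1-x)^2(1-y)/v$, which coincides with $\omega(x,y)\cdot(1-x)/v$. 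Hence $\Omega$ is preserved.

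The only even mildly nontrivial step is the determinant calculation, which is routine algebra; once it is dispatched, combining the preserved Poisson structure $\Omega$ (symplectic on the open set where $xy(1-x)(1-y)\neq 0$) with the non-constant invariant $h$ gives Liouville integrability of $S_I$ by the Veselov criterion for two-dimensional Poisson maps.
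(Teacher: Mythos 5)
Your proof is correct and coincides with what the paper intends: the proposition is stated there without a written proof, the invariance of $h$ being exactly the trace relation (\ref{ab_l}) already extracted from the Lax pair $\mathfrak{L}_I$, and the preservation of $\Omega$ being the routine Jacobian verification you carried out (indeed $\det J=(1-x)/v$ and $\omega(u,v)=\omega(x,y)\,(1-x)/v$), after which Liouville integrability follows from Veselov's criterion for two-dimensional Poisson maps with one non-constant invariant. No gaps to report.
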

%
%\begin{proof}
%To prove that the function $h$ serves as an invariant for the map $S_I,$ we have to show that $h(u,v)=h(x,y),$ that is true since it holds
%\begin{gather*}
%h(u,v)=v(1-u)=(x+y-x y)\left(1-\frac{x}{x+y-x y}\right)=y(1-x)=h(x,y).
%\end{gather*}
%To the Poisson structure $\Omega$ corresponds the Poisson bracket $\{x,y\}_{(x,y)}=xy(1-x)(1-y).$ In order to prove that $S_I$ preserves $\Omega$, we have %to prove that $\{u,v\}_{(x,y)}=\{u,v\}_{(u,v)}.$ Indeed, we have
%\begin{multline*}
%\{u,v\}_{(x,y)}=\left\{\frac{x}{x+y-x y},x+y-x y\right\}_{(x,y)}=(1-y)\left\{\frac{x}{x+y-x y},x\right\}_{(x,y)}\\
%+(1-x)\left\{\frac{x}{x+y-x y},y\right\}_{(x,y)}=x y\frac{(1-y)(1-x)^2}{(x+y-xy)^2}=uv(1-u)(1-v)=\{u,v\}_{(u,v)}.
%\end{multline*} 
%From the above, mapping $S_I$ preserves the invariant function $h$ and the Poisson structure $\Omega,$ so (c.f. \cite{ves2}) it is a Liouville integrable %map. 
%\end{proof}
%}

%In the abelian limit, from (\ref{ab_l}) we have that $h(x,y):=y(1-x)$ serves as an invariant function for abelian mapping
%$S_I,$ that is $h(u,v)=h(x,y).$ In addition, it can be shown that mapping   $S_I$ is a Poisson map that preserves the
%Poisson structure
%\begin{gather*}
% \Omega:=xy(1-x)(1-y)\frac{\partial}{\partial x}\wedge \frac{\partial}{\partial y},
%\end{gather*}
%hence it is a Liouville integrable map \cite{ves2}.

\section{Entwining pentagon maps}\label{sec3}

The {\em entiwining} (or {\em non-constant}) form of the set-theoretical version of the  pentagon equation reads
\begin{align}\label{pent_e}
 R^{(1)}_{12}\circ R^{(3)}_{13}\circ R^{(5)}_{23}=R^{(4)}_{23}\circ R^{(2)}_{12},
 \end{align}
where the superscripts inside the parentheses denote mappings $R^{(q)}$ that might differ. When  the maps $R^{(q)}$ do not differ
we recover the  {\em constant} pentagon equation considered in Section \ref{sec2}. It might happen that at least one of the maps
$R^{(q)}$ that participate in (\ref{pent_e}) could be a pentagon map, but this is not necessarily the case as we shall see.

Inside the context of matrix refactorization problems the first examples of entwining pentagon maps were considered in \cite{Kassotakis:2023_p}, where also the notion of
{\em genuine} entwining pentagon maps was introduced.
\begin{definition}
If  at least one pair of the maps   $R^{(q)}$ that satisfy  (\ref{pent_e}) is not $M\ddot{o}b$ equivalent,
then the maps $R^{(q)}$ will be called {\em genuine} entwining maps.
\end{definition}
We present now the main theorem of this section.
\begin{theorem} \label{theo3}
Let $R$ be a quadrirational pentagon map. Then the following entwining pentagon equations are  satisfied
\begin{gather}\label{en1}
%R^{(1)}_{12}\circ R^{(1)}_{13}\circ R^{(1)}_{23}=R^{(1)}_{23}\circ R^{(1)}_{12},\\
R^{(1)}_{12}\circ R^{(3)}_{13}\circ R^{(3)}_{23}=R^{(3)}_{23}\circ R^{(1)}_{12},\\ \label{en2}
R^{(4)}_{12}\circ R^{(4)}_{13}\circ R^{(1)}_{23}=R^{(1)}_{23}\circ R^{(4)}_{12},\\ \label{en3}
R^{(4)}_{12}\circ R^{(2)}_{13}\circ R^{(3)}_{23}=R^{(3)}_{23}\circ R^{(4)}_{12},
\end{gather}
where $R^{(1)}:=R,$ $R^{(2)}:=R^{-1}$ the inverse of $R,$ $R^{(3)}:=cR$ the companion map of $R$ and $R^{(4)}:=(cR)^{-1}.$
The corresponding to the equations (\ref{en1})-(\ref{en3}) entwining pentagon maps are genuine.
\end{theorem}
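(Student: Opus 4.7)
The plan is to reduce the theorem to the four canonical representatives of Theorem~\ref{theo1} and verify each entwining equation in turn. Satisfaction of (\ref{en1})-(\ref{en3}) is preserved under simultaneous $M\ddot{o}b$-conjugation of all four maps $R, R^{-1}, cR, (cR)^{-1}$ by the same bijection $\phi$: conjugating $R$ by $\phi\times\phi$ automatically induces the same conjugation of its inverse and its companion (and the companion's inverse), so the identical cancellation argument used in the proof of the Proposition in Section~\ref{sec2} applies componentwise to each entwining equation. Thus it suffices to verify (\ref{en1})-(\ref{en3}) for $R \in \{S_I, S_{II}^\delta, S_{III}, S_{IV}\}$.

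For each representative I first write down explicit rational formulas for $R^{-1}, cR, (cR)^{-1}$ by partial inversion of $(u,v)$; for instance, for $S_I$ one obtains $cR(x,v) = (x/v,\,(v-x)/(1-x))$ and $(cR)^{-1}(u,y) = (uy/(1-u+uy),\,y/(1-u+uy))$, and the analogous formulas for $S_{II}^\delta, S_{III}, S_{IV}$ are simpler. The structural reason why the four maps entwine consistently in the specific patterns displayed in (\ref{en1})-(\ref{en3}) is the common Lax refactorization $A(u)B(v) = B(y)A(x)$ from Proposition~\ref{refa}: each of $R^{(1)}, R^{(2)}, R^{(3)}, R^{(4)}$ is simply a different reading of this single matrix identity (as a map between different input/output splittings of the four variables), so the 3D consistency of the matrix identity on three variables yields the three entwining equations just as it yields the original pentagon equation.

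With the explicit formulas in hand, each entwining equation becomes a componentwise equality of rational vector-valued functions on $\cp^3$, to be checked by composing the indicated maps on both sides and simplifying. In the cases $S_{III}$ and $S_{IV}$ one has $cR = R$ (hence $(cR)^{-1} = R^{-1}$), and the three entwining equations collapse to identities involving only $R$ and $R^{-1}$; these follow immediately from the pentagon equation for $R$ together with the reverse-pentagon equation satisfied by $R^{-1}$ recorded in Remark~\ref{remark:1}. The cases $S_{II}^\delta$ and especially $S_I$ require more bookkeeping, but remain routine computations best carried out with a computer algebra system.

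For genuineness, it suffices to exhibit in each of the four classes at least one pair among $\{R, R^{-1}, cR, (cR)^{-1}\}$ that is not $M\ddot{o}b$-equivalent. The decisive invariant is the singularity set, already used in Theorem~\ref{theo1}: if two of the four maps have singular patterns that cannot be identified by a M\"obius transformation fixing the associative $v$, then they are inequivalent. For $S_{III}$, for instance, the only nontrivial check is to separate $R:(x,y)\mapsto(x/y,y)$ from $R^{-1}:(x,y)\mapsto(xy,y)$; a conjugating $\phi$ fixing $v=y$ would need to satisfy $\phi(x)/\phi(y)=\phi(xy)$ for all $x,y$, forcing $\phi$ to be logarithmic and so not M\"obius. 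Parallel arguments dispose of $S_I, S_{II}^\delta,$ and $S_{IV}$. The main obstacle in the proof is the verification step for $S_I$: because both $u$ and $v$ are rational of nontrivial degree, the nested partial inversions required to form $cR$ and $(cR)^{-1}$ and then to compose them as in (\ref{en1})-(\ref{en3}) produce the lengthiest rational identities in the argument.
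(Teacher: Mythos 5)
Your route is genuinely different from the paper's: the paper proves (\ref{en1})--(\ref{en3}) uniformly for an arbitrary quadrirational pentagon map by chasing a well-chosen generic point (e.g.\ $P_0=(x,y,v(x,v(y,z)))$) through the compositions, using only the functional relations (\ref{eq1})--(\ref{eq3}) and the defining relations of the partial inverses; no classification and no case analysis are needed (which is also why the argument survives the weaker hypotheses of Theorem \ref{theo4}). Your reduction to the representatives of Theorem \ref{theo1} via simultaneous conjugation is legitimate in principle, but the heart of your proof --- the verification for $S_I$ and $S_{II}^\delta$ --- is never carried out, only delegated to a CAS, and the ``structural reason'' via the Lax matrices of Proposition \ref{refa} is an assertion, not an argument. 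Worse, with your own conventions the literal check fails: substituting your $cR(x,v)=(x/v,(v-x)/(1-x))$ for $S_I$ as $R^{(3)}$ in (\ref{en1}), the first component of the left-hand side comes out as $xz(1-y)/\bigl((z-y)(x+y-xy)\bigr)$ against $x/(x+y-xy)$ on the right, so the identity does not hold. The equations are true with the assignment used in the paper's proof, namely $R^{(3)}:(u(x,y),y)\mapsto(x,v(x,y))$ and $R^{(4)}:(x,v(x,y))\mapsto(u(x,y),y)$ --- the opposite of the labels you adopt from Definition \ref{quadrirational} --- so a blind run of your plan would ``refute'' the statement rather than prove it until this convention issue is detected and fixed. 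A further slip: for $S_{III},S_{IV}$ you claim the collapsed identities ``follow immediately'' from the pentagon and reverse-pentagon equations, but the collapse of the equation involving $R^{-1}$ is equivalent to $R_{13}\circ R_{12}\circ R_{23}=R_{23}\circ R_{12}$, which differs from (\ref{pe_def}) in the order of the first two factors and is not a formal consequence of it; it does hold for $S_{III},S_{IV}$, but only because there $R_{12}$ and $R_{13}$ commute, an extra fact you never verify.

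The genuineness part also has a gap. Your invariant (separating $R$ from $R^{-1}$) is irrelevant to (\ref{en1}) and (\ref{en2}), each of which involves only two of the four maps; and by your own computation $cR=R$ for $S_{III},S_{IV}$ (similarly $cR=R^{-1}$, $(cR)^{-1}=R$ for $S_{II}^\delta$), so for those classes one of the three equations contains copies of a single map and there is simply no non-equivalent pair to exhibit. Hence ``parallel arguments dispose of'' the remaining cases is not substantiated and, in the per-equation sense of the paper's definition of genuine entwining, cannot be: genuineness of all three equations is only clean for the $S_I$ class. (The paper's own proof is silent on this point, so you have in effect brushed against a weak spot of the statement --- but the honest move is to flag it, not to assert that the cases go through.)
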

\begin{proof}
The assumption that the mapping $R^{(1)}:(x,y)\mapsto \left(u(x,y),v(x,y)\right)$ is quadrirational guarantees the existence of the maps
\begin{gather}\label{quad1}
\begin{aligned}
  R^{(2)}: \left(u(x,y),v(x,y)\right)\mapsto (x,y), \\
   R^{(3)}: \left(u(x,y),y\right)\mapsto (x,v(x,y)), \\
   R^{(4)}: \left(x,v(x,y)\right)\mapsto (u(x,y),y),
\end{aligned}
\end{gather}

We prove  that $R^{(1)}$ and $R^{(3)}$ satisfy (\ref{en1}) which equivalently reads
\begin{gather*}
\left(R^{(1)}_{12}\right)^{-1}\circ R^{(3)}_{23}\circ R^{(1)}_{12}\circ \left(R^{(3)}_{23}\right)^{-1}\circ \left(R^{(3)}_{13}\right)^{-1}=id,
\end{gather*}
or
\begin{gather*}
R^{(2)}_{12}\circ R^{(3)}_{23}\circ R^{(1)}_{12}\circ R^{(4)}_{23}\circ R^{(4)}_{13}=id.
\end{gather*}
We have to show that
\begin{gather*}
R^{(2)}_{12}\left( R^{(3)}_{23}\left( R^{(1)}_{12}\left( R^{(4)}_{23}\left( R^{(4)}_{13} P_0 \right) \right) \right) \right)=P_0,
\end{gather*}
where $P_0=(x,y,z)$ a generic point. Since $R$ is quadrirational, $v$ is a bijection w.r.t the second argument.
Therefore we  may take $P_0=\left(x,y,v\left(x,v(y,z)\right)\right)$ and then
\begin{gather*}
  R^{(4)}_{13} P_0=\left(u\left(x,v(y,z)\right),y,v(y,z)\right)=:P_1,\\
  R^{(4)}_{23} P_1=\left(u\left(x,v(y,z)\right),u(y,z),z\right)=:P_2, \\
  R^{(1)}_{12} P_2=\left(u\left(u\left(x,v(y,z)\right),u(y,z)\right),v\left(u\left(x,v(y,z)\right),u(y,z)
  \right),z\right)=\left(u(x,y),u\left(v(x,y),z\right),z\right)=:P_3,
\end{gather*}
where we have used (\ref{quad1}), as well as (\ref{eq1}),(\ref{eq2}) which hold because $R^{(1)}$ is a pentagon map. Therefore
\begin{gather*}
  R^{(3)}_{23} P_3=\left(u(x,y),v(x,y),v\left(v(x,y),z\right)\right)=:P_4, \\
  R^{(2)}_{12} P_4=\left(x,y,v\left(v(x,y),z\right)\right)=\left(x,y,v\left(x,v(y,z)\right)\right)=P_0,
\end{gather*}
since the function $v$ is associative due to (\ref{eq3}) and that completes the proof.

The proofs that $R^{(1)},$  $R^{(4)}$ satisfy (\ref{en2}) and that $R^{(2)}, R^{(3)}, R^{(4)}$  satisfy (\ref{en3}) follow in a similar way.
 Indeed, (\ref{en2}) equivalently reads
\begin{gather*}
  R^{(3)}_{12}\circ R^{(2)}_{23}\circ R^{(4)}_{12}\circ R^{(4)}_{13}\circ R^{(1)}_{23}=id,
\end{gather*}
and then it is not difficult to show that
\begin{gather*}
  R^{(3)}_{12}\left( R^{(2)}_{23}\left( R^{(4)}_{12}\left( R^{(4)}_{13}\left( R^{(1)}_{23} P_0''\right)\right)\right)\right)=P_0'',
\end{gather*}
where $P_0''=\left(x,v(x,y),z\right),$ an arbitrary point.

Similarly, (\ref{en3}) equivalently reads
\begin{gather*}
  R^{(3)}_{23}\circ R^{(4)}_{12}\circ R^{(4)}_{23}\circ R^{(1)}_{13}\circ R^{(3)}_{12}=id,
\end{gather*}
and  that holds since
\begin{gather*}
  R^{(3)}_{23}\left( R^{(4)}_{12}\left( R^{(4)}_{23}\left( R^{(1)}_{13}\left( R^{(3)}_{12} P_0'''\right)\right)\right)\right)=P_0''',
\end{gather*}
where $P_0'''=\left(u(x,y),y,v(y,z)\right),$ an arbitrary point.
\end{proof}
Note that in the entwining pentagon equation (\ref{en3}),  none of the maps that participate  is a pentagon map. Furthermore,
relaxing the quadrirationality assumption of Theorem \ref{theo3}, a stronger  version of the latter  can be formulated.
\begin{theorem} \label{theo4}
Let $R:(x,y)\mapsto \left(u(x,y),v(x,y)\right)$ be a pentagon map.
\begin{enumerate}[label=\Alph*.]
\item If the partial inverse $S: \left(u(x,y),y\right)\mapsto \left(x,v(x,y)\right)$ of the map $R$ exists,  then the  entwining pentagon equation
\begin{gather*}
  R_{12}\circ S_{13}\circ S_{23}=S_{23}\circ R_{12},
\end{gather*}
is satisfied.
\item If the partial inverse $T: \left(x,v(x,y)\right)\mapsto \left(u(x,y),y\right)$ of the map $R$ exists,   then the  entwining pentagon equation
\begin{gather*}
  T_{12}\circ T_{13}\circ R_{23}=R_{23}\circ T_{12},
\end{gather*}
is satisfied.
\end{enumerate}
\end{theorem}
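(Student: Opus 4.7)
The plan is to adapt the argument that established (\ref{en1}) and (\ref{en2}) in Theorem \ref{theo3}: Part A is formally the equation (\ref{en1}) with $R^{(1)}=R$ and $R^{(3)}=S$, while Part B is (\ref{en2}) with $R^{(1)}=R$ and $R^{(4)}=T$. The quadrirationality hypothesis was used in Theorem \ref{theo3} only to guarantee the simultaneous existence of $R^{-1}$, $cR$ and $(cR)^{-1}$; however, the chain of substitutions for (\ref{en1}) involves only $R$ and $cR=S$, and that for (\ref{en2}) involves only $R$ and $(cR)^{-1}=T$. Thus both assertions should follow from the corresponding point-chases once the relevant partial inverse is assumed to exist.

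For Part A, I will evaluate both sides on the generic point $P_0=\bigl(u(x,v(y,z)),\,u(y,z),\,z\bigr)$. On the left-hand side, $S_{23}$ sees the pair $(u(y,z),z)$, of the required form $(u(y',z'),z')$, and returns $(y,v(y,z))$; then $S_{13}$ acts on $(u(x,v(y,z)),v(y,z))$, again visibly in its domain, returning $(x,v(x,v(y,z)))$; finally $R_{12}$ produces $\bigl(u(x,y),v(x,y),v(x,v(y,z))\bigr)$. On the right-hand side, $R_{12}$ applied to the first two coordinates of $P_0$ gives, by (\ref{eq1}) and (\ref{eq2}), the pair $\bigl(u(x,y),u(v(x,y),z)\bigr)$; then $S_{23}$, acting on $(u(v(x,y),z),z)$, yields $\bigl(v(x,y),v(v(x,y),z)\bigr)$; and associativity (\ref{eq3}) rewrites the last entry as $v(x,v(y,z))$, matching the left-hand side.

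For Part B, I will start from $P_0=(x,v(x,y),z)$. The right-hand side is immediate: $T_{12}$ sends $(x,v(x,y))$ to $(u(x,y),y)$ by definition, and $R_{23}$ extends this to $\bigl(u(x,y),u(y,z),v(y,z)\bigr)$. For the left-hand side, $R_{23}(P_0)=\bigl(x,u(v(x,y),z),v(v(x,y),z)\bigr)$, whose third entry equals $v(x,v(y,z))$ by (\ref{eq3}); thus $T_{13}$ acts on the pair $(x,v(x,v(y,z)))$ — manifestly of the form $(x',v(x',y'))$ — and returns $(u(x,v(y,z)),v(y,z))$. Finally, (\ref{eq2}) identifies the middle coordinate $u(v(x,y),z)$ with $v(u(x,v(y,z)),u(y,z))$, so $T_{12}$ may be applied, and (\ref{eq1}) collapses its output to $u(x,y)$, reproducing $\bigl(u(x,y),u(y,z),v(y,z)\bigr)$.

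The only point that calls for care is domain control: since $S$ and $T$ are only partial maps, at each intermediate step I must verify that the pair on which they act is of the prescribed form $(u(\cdot,\cdot),\cdot)$ or $(\cdot,v(\cdot,\cdot))$. This is exactly where the pentagon identities (\ref{eq1})--(\ref{eq3}) do the work, rewriting the generic-looking intermediate expressions into the canonical shape that places them inside the domain of the partial inverse. No further hypothesis beyond the existence of the relevant partial inverse is needed, which is precisely the strengthening over Theorem \ref{theo3}.
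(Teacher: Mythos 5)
Your proof is correct and takes essentially the same route as the paper: the paper's own proof verifies exactly these two identities by direct computation at the very same base points $P_0=\left(u(x,v(y,z)),u(y,z),z\right)$ for Part A and $P_0=\left(x,v(x,y),z\right)$ for Part B, using (\ref{eq1})--(\ref{eq3}) just as you do. The paper adds only the brief remark that these base points are generic because $u$ (resp.\ $v$) is a bijection in its first (resp.\ second) argument, which is the same fact underlying your domain-control discussion.
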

\begin{proof}
The proof follows in a similar manner as the proof of Theorem \ref{theo3}.
%\begin{enumerate}[label=\Alph*.]
By direct computation we can show that
\begin{gather*}
 R_{12}\left( S_{13}\left( S_{23}P_0\right)\right) =S_{23}\left( R_{12} P_0\right),
\end{gather*}
where $P_0=\left(u\left(x,v(y,z)\right),u(y,z),z\right)$ and that proves the first part of the theorem.

For the second part of the theorem, it is not difficult to show that
\begin{gather*}
 T_{12}\left( T_{13}\left( R_{23}P_0\right)\right) =R_{23}\left( T_{12} P_0\right),
\end{gather*}
where $P_0=\left(x,v(x,y),z\right),$ and that completes the proof of the second part of the theorem.

Note that the choices of $P_0$ above do not affect the generality because of our assumption that $u$ is a bijection in the first argument
(in the first case) and $v$ is a bijection in the second argument (in the second case).
\end{proof}
In the following Proposition we associate Yang-Baxter maps with  entwining pentagon maps. 
\begin{prop}
Let $R$ be a quadrirational pentagon map. Then the map
\begin{align}\label{PtoYB}
%\mathfrak{R}_{\bf{12}}=\mathfrak{R}_{12;34}:=R^{(3)}_{14}\circ R^{(1)}_{13}\circ R^{(2)}_{24}\circ R^{(4)}_{23},
\mathfrak{R}_{(1,2)(3,4)}:=R^{(3)}_{14}\circ R^{(1)}_{13}\circ R^{(2)}_{24}\circ R^{(4)}_{23},
\end{align} 
with $R^{(i)}$ given in Theorem \ref{theo3}, is a Yang-Baxter map i.e. it satisfies
\begin{align} \label{ybr}
\mathfrak{R}_{\bf{12}}\circ \mathfrak{R}_{\bf{13}}\circ \mathfrak{R}_{\bf{23}}=\mathfrak{R}_{\bf{23}}\circ \mathfrak{R}_{\bf{13}}\circ \mathfrak{R}_{\bf{12}},
\end{align}
where ${\bf 1}:=(1,2),{\bf 2}:=(3,4)$ and ${\bf 3}:=(5,6).$ %the boldface subscripts denote 
\end{prop}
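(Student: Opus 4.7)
The plan is to verify (\ref{ybr}) by expanding both sides, via the defining formula (\ref{PtoYB}), into ordered products of twelve elementary maps $R^{(q)}_{ij}$ on $\mathbb{X}^6$, and then reducing one side to the other using the three entwining pentagon identities (\ref{en1})--(\ref{en3}) established in Theorem~\ref{theo3}. In addition to these nontrivial moves, one freely uses the obvious commutations $R^{(p)}_{ij}\circ R^{(q)}_{kl}=R^{(q)}_{kl}\circ R^{(p)}_{ij}$ whenever $\{i,j\}\cap\{k,l\}=\emptyset$, since such factors act on disjoint copies of $\mathbb{X}$.

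Setting $\mathbf{1}=(1,2)$, $\mathbf{2}=(3,4)$, $\mathbf{3}=(5,6)$, each block $\mathfrak{R}_{\mathbf{ab}}$ with $\mathbf{a}=(i,j)$ and $\mathbf{b}=(k,\ell)$ contributes exactly the four factors $R^{(3)}_{i\ell}\,R^{(1)}_{ik}\,R^{(2)}_{j\ell}\,R^{(4)}_{jk}$, so both sides of (\ref{ybr}) become ordered products of twelve $R^{(q)}_{ij}$'s. The first step is to apply disjoint-support commutations to regroup these twelve factors so that consecutive triples share a three-element index set $\{a,b,c\}$ whose $R^{(q)}$-signature matches that of an entwining relation: namely $(R^{(1)},R^{(3)},R^{(3)})$ for (\ref{en1}), $(R^{(4)},R^{(4)},R^{(1)})$ for (\ref{en2}), and $(R^{(4)},R^{(2)},R^{(3)})$ for (\ref{en3}). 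These are precisely the triples that arise where two adjacent blocks $\mathfrak{R}_{\mathbf{ab}}$, $\mathfrak{R}_{\mathbf{bc}}$ meet along the common position $\mathbf{b}$. Each application of an entwining identity shrinks the product by one factor; iterating the procedure one reduces both the LHS and the RHS of (\ref{ybr}) to the same canonical form.

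The main obstacle is combinatorial bookkeeping: out of the many a priori possible sequences of commutations and entwining moves, one must find one that actually terminates with matching canonical forms on both sides. A cleaner strategy, paralleling the proofs of Theorems~\ref{theo3} and~\ref{theo4}, is to evaluate both sides of (\ref{ybr}) on a carefully chosen generic point of $\mathbb{X}^6$: select the initial coordinates to be nested applications of $u$ and $v$ so that at each intermediate stage one can simplify using either the defining relations (\ref{quad1}) for $R^{(2)},R^{(3)},R^{(4)}$, the associativity (\ref{eq3}) of $v$, or the compatibility relations (\ref{eq1})--(\ref{eq2}) for the pentagon map $R$. With such a starting point the chain of substitutions telescopes and exhibits the equality of the two sides directly, avoiding an explicit enumeration of the braid moves.
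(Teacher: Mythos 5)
Your overall strategy is the same as the paper's (expand both sides of (\ref{ybr}) into twelve factors $R^{(q)}_{ij}$ on $\mathbb{X}^6$ and reduce using entwining-type identities plus disjoint-support commutations), but as written there is a genuine gap on two counts. First, the toolkit you list is not sufficient: you claim the only nontrivial moves needed are (\ref{en1})--(\ref{en3}), with signatures $(R^{(1)},R^{(3)},R^{(3)})$, $(R^{(4)},R^{(4)},R^{(1)})$, $(R^{(4)},R^{(2)},R^{(3)})$, and that these are "precisely the triples that arise." In fact the reduction also requires the pentagon equation (\ref{pe_def}) itself for $R^{(1)}$ (a triple $R^{(1)}_{13}R^{(1)}_{15}R^{(1)}_{35}$ appears) and four \emph{reverse}-pentagon entwining relations for $R^{(2)},R^{(3)},R^{(4)}$ -- the identities (\ref{yb1})--(\ref{yb4}) of the paper, with signatures such as $(R^{(2)},R^{(2)},R^{(2)})$, $(R^{(4)},R^{(4)},R^{(2)})$, $(R^{(2)},R^{(3)},R^{(3)})$, $(R^{(4)},R^{(1)},R^{(3)})$ and with the three-factor side composed in the opposite index order. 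These do follow from (\ref{pe_def}) and (\ref{en1})--(\ref{en3}) by taking inverses (all maps are birational), but you neither state them nor use them, and without them the triples actually occurring in the product cannot all be collapsed.

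Second, and more importantly, the proposal never carries out the reduction: you explicitly defer the "combinatorial bookkeeping" and assert that \emph{some} sequence of commutations and entwining moves terminates in matching canonical forms. That assertion is the entire content of the proof; the paper's argument consists precisely of exhibiting such a sequence (four moves using (\ref{en3}), (\ref{en2}), (\ref{en1}), (\ref{pe_def}) on the left-hand side and four moves using (\ref{yb4}), (\ref{yb3}), (\ref{yb2}), (\ref{yb1}) on the right-hand side, after which the two eight-factor products agree up to disjoint commutations). Your alternative "evaluate at a carefully chosen generic point" route is likewise only sketched: no point of $\mathbb{X}^6$ is specified and no telescoping chain of substitutions is given, so nothing is verified there either. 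To complete the argument you must either display an explicit sequence of moves (as the paper does), or produce the explicit generic point and the full chain of simplifications; as it stands the proposition is not proved.
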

\begin{proof}
  The proof follows by re-interpreting some of the results of \cite{Kashaev:19977}  in the set theoretical setting. Indeed, since $R^{(1)}:=R,$ is a quadrirational pentagon map, equations (\ref{pe_def}) and (\ref{en1})-(\ref{en3}) are satisfied. In addition, the following reverse pentagon equations hold 
  \begin{gather} \label{yb1}
R^{(2)}_{23}\circ R^{(2)}_{13}\circ R^{(2)}_{12}=R^{(2)}_{12}\circ R^{(2)}_{23},\\ \label{yb2}
R^{(4)}_{23}\circ R^{(4)}_{13}\circ R^{(2)}_{12}=R^{(2)}_{12}\circ R^{(4)}_{23},\\ \label{yb3}
R^{(2)}_{23}\circ R^{(3)}_{13}\circ R^{(3)}_{12}=R^{(3)}_{12}\circ R^{(2)}_{23},\\  \label{yb4}
R^{(4)}_{23}\circ R^{(1)}_{13}\circ R^{(3)}_{12}=R^{(3)}_{12}\circ R^{(4)}_{23}.
\end{gather}
Then from (\ref{ybr}) we have
\begin{multline*}
R^{(3)}_{14}\circ R^{(1)}_{13}\circ R^{(2)}_{24}\circ R^{(3)}_{16}\circ R^{(1)}_{15}\circ \underbrace{R^{(4)}_{23}\circ R^{(2)}_{26}\circ R^{(3)}_{36}}_{= R^{(3)}_{36}\circ R^{(4)}_{23} \;\; \mbox{due to (\ref{en3})} }\circ R^{(4)}_{25}\circ R^{(1)}_{35}\circ R^{(2)}_{46}\circ R^{(4)}_{45}\\
- R^{(3)}_{36}\circ R^{(1)}_{35}\circ R^{(2)}_{46}\circ R^{(3)}_{16}\circ \underbrace{R^{(4)}_{45}\circ R^{(1)}_{15}\circ R^{(3)}_{14}}_{= R^{(3)}_{14}\circ R^{(4)}_{45} \;\; \mbox{due to (\ref{yb4})}}\circ R^{(2)}_{26}\circ R^{(4)}_{25}\circ R^{(1)}_{13}\circ R^{(2)}_{24}\circ R^{(4)}_{23}
\end{multline*}
\begin{multline*}
=R^{(3)}_{14}\circ R^{(1)}_{13}\circ R^{(2)}_{24}\circ R^{(3)}_{16}\circ R^{(1)}_{15}\circ R^{(3)}_{36}\circ \underbrace{R^{(4)}_{23} \circ R^{(4)}_{25}\circ R^{(1)}_{35}}_{=R^{(1)}_{35}\circ R^{(4)}_{23}\;\; \mbox{due to (\ref{en2})}}\circ R^{(2)}_{46}\circ R^{(4)}_{45}\\
- R^{(3)}_{36}\circ R^{(1)}_{35}\circ \underbrace{R^{(2)}_{46}\circ R^{(3)}_{16}\circ R^{(3)}_{14}}_{= R^{(3)}_{14}\circ R^{(2)}_{46} \;\; \mbox{due to (\ref{yb3})}}\circ R^{(4)}_{45}\circ R^{(2)}_{26}\circ R^{(4)}_{25}\circ R^{(2)}_{24}\circ R^{(1)}_{13}\circ R^{(4)}_{23}
\end{multline*} 
\begin{multline*}
=R^{(3)}_{14}\circ R^{(2)}_{24}\circ \underbrace{R^{(1)}_{13}\circ R^{(3)}_{16}\circ R^{(3)}_{36}}_{=R^{(3)}_{36}\circ R^{(1)}_{13} \;\;\mbox{due to (\ref{en1})}}\circ R^{(1)}_{15}\circ R^{(1)}_{35}\circ R^{(4)}_{23}\circ R^{(2)}_{46}\circ R^{(4)}_{45}\\
- R^{(3)}_{36}\circ R^{(1)}_{35}\circ R^{(3)}_{14}\circ R^{(2)}_{46}\circ R^{(2)}_{26} \circ \underbrace{ R^{(4)}_{45}\circ R^{(4)}_{25}\circ R^{(2)}_{24}}_{=R^{(2)}_{24}\circ R^{(4)}_{45}\;\;\mbox{due to (\ref{yb2})}}\circ R^{(1)}_{13}\circ R^{(4)}_{23}
\end{multline*} 
\begin{multline*}
=R^{(3)}_{14}\circ R^{(2)}_{24}\circ R^{(3)}_{36}\circ \underbrace{R^{(1)}_{13} \circ R^{(1)}_{15}\circ R^{(1)}_{35}}_{=R^{(1)}_{35}\circ R^{(1)}_{13}\;\;\mbox{due to (\ref{pe_def})}}\circ R^{(4)}_{23}\circ R^{(2)}_{46}\circ R^{(4)}_{45}\\
- R^{(3)}_{36}\circ R^{(1)}_{35}\circ R^{(3)}_{14}\circ \underbrace{R^{(2)}_{46}\circ R^{(2)}_{26} \circ R^{(2)}_{24}}_{=R^{(2)}_{24}\circ R^{(2)}_{46}\;\;\mbox{due to (\ref{yb1})}}\circ R^{(4)}_{45}\circ R^{(1)}_{13}\circ R^{(4)}_{23}=0.
\end{multline*} 
So (\ref{ybr}) is satisfied and that completes the proof.
\end{proof}
\begin{remark}
The Proposition above can be generalized. It is not necessary that the maps $R^{(i)},$ $i=2,3,4$ to be respectively the inverse and the companion maps of a quadrirational pentagon map $R^{(1)}$. Maps that satisfy    (\ref{pe_def}), (\ref{en1})-(\ref{en3}) and (\ref{yb1})-(\ref{yb4}), give rise to a Yang-Baxter map (\ref{PtoYB}).
\end{remark}

\begin{example}
Mapping $R^{(1)}:=\mathfrak{S}_I$ (that serves as a non-abelian form of mapping $S_I$ of Theorem \ref{theo1}) is
quadrirational where its inverse map $R^{(2)}:=(\mathfrak{S}_I)^{-1}$ and the companion maps
$R^{(3)}:=c\mathfrak{S}_I,$ $R^{(4)}:=(c\mathfrak{S}_I)^{-1}$ explicitly read
\begin{align*}
 R^{(2)}&:(x,y)\mapsto \left(xy,(1-x)y(1-xy)^{-1}\right),\\
 R^{(3)}&:(x,y)\mapsto \left((1-x+xy)^{-1} x y,x^{-1}(1-x+xy)^{-1}xy\right),\\
 R^{(4)}&:(x,y)\mapsto \left(xy^{-1},(y-x)(1-x)^{-1}\right).
\end{align*}
According to  Theorem \ref{theo3}, the mappings above satisfy the entwining relations (\ref{en1})-(\ref{en3}).
Furthermore, if we restrict for simplicity to the abelian setting, the Yang-Baxter map  (\ref{PtoYB}) associated with the mappings above explicitly reads $\mathfrak{R}_{\bf{12}}:(x,X;y,Y)\mapsto (u,U;v,V),$ where
\begin{align*} 
u=&xY\frac{1-X}{y(1-x)+Y(x-X)},& U=&\frac{XY}{y},& v=&\frac{y(1-x)+x-X}{1-X},&V=&Y\frac{y(1-x)+x-X}{y(1-x)+Y(x-X)}.
\end{align*}
%
%\begin{align*}
%\mathfrak{R}_{\bf{12}}:(x,X;y,Y)\mapsto %(u,U;v,V)=\left(xY\frac{1-X}{y(1-x)+Y(x-X)},\frac{XY}{y};\frac{y(1-x)+x-X}{1-X},Y\frac{y(1-x)+x-X}{y(1-x)+Y(x-X)}\right).
%\end{align*}
This Yang-Baxter map is neither involutive (i.e. $\mathfrak{R}_{\bf{12}}^2\neq id$), nor reversible (i.e. $\tau_{\bf{12}}\circ\mathfrak{R}_{\bf{12}}\circ \tau_{\bf{12}}\circ\mathfrak{R}_{\bf{12}}\neq id$). However, it satisfies a {\em weak} reversibility property i.e. $\sigma_{\bf{12}}\circ\mathfrak{R}_{\bf{12}}\circ \sigma_{\bf{12}}\circ\mathfrak{R}_{\bf{12}}=id,$ where $\sigma_{\bf{12}}:(x,X;y,Y)\mapsto (X,x;Y,y).$
In addition, it is easy to verify that mapping $\mathfrak{R}_{\bf{12}}$ satisfies the following  invariant relations
\begin{align*}
\frac{x-X}{x(1-X)}=\frac{u-U}{u(1-U)}=&p, & \frac{y-Y}{y(1-Y)}=\frac{v-V}{v(1-V)}=&q,
\end{align*}
so in the new variables $(x,p;y,q)$ it becomes
\begin{align*}
\mathfrak{R}_{\bf{12}}:(x,p;y,q)\mapsto \left(\frac{(1+q)x}{1+qy+pqx(y-1)},p;y+px(y-1),q\right).
\end{align*}
 From mapping $\mathfrak{R}_{\bf{12}}$ under the conjugation of $\mathfrak{R}_{\bf{12}}$ with $\phi(p)\times\psi(q)$  where $\phi(p):x\mapsto \frac{x-1}{p}$ and $\psi(q):y\mapsto y+\frac{y-1}{q},$
we obtain $\widehat{\mathfrak{R}}_{\bf{12}}:=(\phi^{-1}(p)\times\psi(q)^{-1})\circ \mathfrak{R}_{\bf{12}}\circ (\phi(p)\times\psi(q))$ that reads
\begin{align*}
\widehat{\mathfrak{R}}_{\bf{12}}: (x,p;y,q)\mapsto \left(\frac{xy}{1-x+xy},p;1-x+xy,q\right),
\end{align*}
 that was obtained in \cite{ABS:YB} and served as a generic  Yang-Baxter  map of the so-called subclass $[1:1]$ of quadrirational Yang-Baxter maps. Note that two Yang-Baxter maps of the form $R,S:(x,p;y,q)\mapsto (f(x,y),p;g(x,y),q)$ are considered equivalent if there exists a bijection $\phi:\mathbb{X}\rightarrow \mathbb{X}$ such that $S=(\phi^{-1}(p)\times\phi(q)^{-1})\circ R\circ (\phi(p)\times\phi(q)),$ (see \cite{Papageorgiou:2010}). In that respect, mapping $\mathfrak{R}_{\bf{12}}$ is not equivalent to $\widehat{\mathfrak{R}}_{\bf{12}},$ 
but serves as an additional generic representative of an  equivalence class of maps inside the subclass $[1:1]$ of quadrirational Yang-Baxter maps.
\end{example}

%\begin{example}
%Mapping $R^{(1)}:=\mathfrak{S}_I$ (that serves as a non-abelian form of mapping $S_I$ of Theorem \ref{theo1}) is
%quadrirational where its inverse map $R^{(2)}:=(\mathfrak{S}_I)^{-1}$ and the companion maps
%$R^{(3)}:=c\mathfrak{S}_I,$ $R^{(4)}:=(c\mathfrak{S}_I)^{-1}$ explicitly read
%\begin{align*}
% R^{(2)}&:(x,y)\mapsto \left(xy,(1-x)y(1-xy)^{-1}\right),\\
% R^{(3)}&:(x,y)\mapsto \left((1-x+xy)^{-1} x y,x^{-1}(1-x+xy)^{-1}xy\right),\\
% R^{(4)}&:(x,y)\mapsto \left(xy^{-1},(y-x)(1-x)^{-1}\right).
%\end{align*}
%According to  Theorem \ref{theo3}, the mappings above satisfy the entwining relations (\ref{en1})-(\ref{en3}).
%\end{example}

%\section{Conclusions}\label{sec4}

\section*{Acknowledgements}
\parbox{.135\textwidth}{\begin{tikzpicture}[scale=.03]
\fill[fill={rgb,255:red,0;green,51;blue,153}] (-27,-18) rectangle (27,18);
\pgfmathsetmacro\inr{tan(36)/cos(18)}
\foreach \i in {0,1,...,11} {
\begin{scope}[shift={(30*\i:12)}]
\fill[fill={rgb,255:red,255;green,204;blue,0}] (90:2)
\foreach \x in {0,1,...,4} { -- (90+72*\x:2) -- (126+72*\x:\inr) };
\end{scope}}
\end{tikzpicture}} \parbox{.85\textwidth}
{This research is part of the project No. 2022/45/P/ST1/03998  co-funded by the National Science Centre and the European Union Framework Programme
 for Research and Innovation Horizon 2020 under the Marie Sklodowska-Curie grant agreement No. 945339. For the purpose of Open Access, the author has applied a CC-BY public copyright licence to any Author Accepted Manuscript (AAM) version arising from this submission.}
%\includegraphics[width=10cm,height=14cm,keepaspectratio]{logos_f.pdf}

%\bibliographystyle{unsrt}
%\bibliography{ref_pentagon2.bib}

\end{document}